\newtheorem{theorem}{Theorem}
\newtheorem{lemma}{Lemma}
\newtheorem{proposition}{Proposition}
\newtheorem{corollary}{Corollary}
\newtheorem{assumption}{Assumption}
\begin{document}


\title{
Chaotic Dependence and Degree-Structure Mechanisms in Optimal Pinning Control of Complex Networks
}

\author{Qingyang~Liu,~Tianlong~Fan\textsuperscript{*},~Liming~Pan\textsuperscript{*},~and~Linyuan~L\"{u}\textsuperscript{*}\IEEEmembership{, Senior Member, IEEE}
\thanks{The authors are with the School of Cyber Science and Technology, University of Science and Technology of China, 230026 Hefei, Anhui, China.}
\thanks{\textsuperscript{*}Corresponding authors: Tianlong Fan (tianlong.fan@ustc.edu.cn), Liming Pan (pan\_liming@ustc.edu.cn), Linyuan L\"{u} (linyuan.lv@ustc.edu.cn).}
\thanks{This work has been submitted to the IEEE for possible publication. 
Copyright may be transferred without notice, after which this version may no longer be accessible.}
}

\maketitle

\begin{abstract}

Selecting a minimal set of driver nodes to control large-scale complex networks is a central but computationally intractable problem, and the lack of general theory further hinders its resolution. Leveraging a degree-based mean-field (annealed) approximation from statistical physics, we analytically reveal how the structural degree distribution systematically governs synchronization performance and derive exact conditions for the globally optimal pinning set. Remarkably, the optimal configuration exhibits chaotic dependence: infinitesimal changes in the pinning budget can trigger abrupt and discontinuous shifts in both the composition of driver nodes and the resulting controllability. This phenomenon overturns the widely assumed monotonic improvement of greedy strategies and exposes an intrinsic structural instability in heterogeneous networks. Extensive simulations on synthetic and empirical systems confirm the universality of this effect and its governing role of degree-distribution features. These results establish a unified link between degree heterogeneity and network controllability, with implications for resilient infrastructures, neural dynamics, and social coordination. These insights further yield efficient low-complexity algorithms that approximate the global optimum, enabling practical applications in large-scale systems.

\end{abstract}

\begin{IEEEkeywords}
Network synchronization, pinning control, chaotic dependence, degree heterogeneity, annealed approximation
\end{IEEEkeywords}

\section{Introduction}

\IEEEPARstart{S}{ynchronization} of complex networks is a fundamental phenomenon underpinning diverse natural and engineered systems, from neuronal ensembles and power grids to social behaviors and biological rhythms~\cite{arenas2008synchronization}. However, spontaneous synchronization is rarely achievable in realistic network structures~\cite{pecora1998msf}, necessitating targeted external interventions, known as pinning control, wherein synchronization is driven by controlling only a strategically chosen subset of nodes—referred to as driver nodes.

Traditionally, node selection for pinning control relies predominantly on heuristic centralities, such as degree, betweenness, or closeness~\cite{grigoriev1997pinning,wang2002pinning,li2004pinning,liu2007pinning,zhou2008adaptive}. While intuitive, these heuristic strategies generally offer superficial node rankings without fully elucidating the underlying structural mechanisms driving synchronization. Advanced spectral approaches address this gap by analyzing eigenvalues of the Laplacian or grounded Laplacian matrices~\cite{pecora1998msf,moradiamani2017influential,yu2013optimalvertex,liu2024uniformity,liu2021spectralpinning,mao2025perturbation}, revealing deeper links between spectral properties and synchronization capability. Yet, despite their theoretical depth, these spectral methods typically remain computationally demanding and yield only locally optimal node sets, leaving the globally optimal selection unresolved.

Here, we propose a robust analytical framework to identify globally optimal pinning node sets by leveraging the annealed approximation from statistical physics~\cite{pastor2015epidemic,ferreira2012epidemic,pan2021optimal}. Under this approximation, we simplify the network structure by replacing the adjacency matrix entries with their expected connection probabilities derived from the Uniform Configuration Model (UCM)~\cite{pastor2015epidemic,ferreira2012epidemic,pan2021optimal}. This analytically tractable approach provides exact global solutions for the optimal driver-node selection, elucidating how synchronization performance is fundamentally governed by the structural degree distribution.

Remarkably, our analysis uncovers an unprecedented chaotic dependence phenomenon: incremental changes in the number of driver nodes can induce abrupt and dramatic shifts in the optimal node set composition and associated synchronization performance. This finding fundamentally challenges conventional top-$k$ greedy heuristics, which typically assume incremental improvement in node selection quality.

Through extensive numerical validations on both synthetic configuration models and real-world empirical networks, we systematically confirm that our analytically derived optimal solutions significantly outperform traditional heuristics. Furthermore, we systematically investigate how low-degree saturation, high-degree truncation, and the power-law exponent of network degree distributions affect synchronization effectiveness. In particular, our results identify low-degree saturation as an enhancement factor and high-degree truncation as a suppressive factor in synchronization efficiency, offering critical insights into the structural determinants of network controllability.

The remainder of this paper is structured as follows. In Section~\ref{sec:theory}, we rigorously develop the theoretical foundations for identifying the globally optimal pinning configuration under the annealed approximation. Section~\ref{sec:exp} presents extensive numerical validations, examining the effects of various degree distribution characteristics on synchronization and verifying theoretical predictions with empirical data. We conclude the paper in Section~\ref{sec:conclusion} by summarizing key contributions and outlining potential directions for future research.

\section{Theoretical Results}\label{sec:theory}

\subsection{Preliminaries and Problem Formulation}

\subsubsection{Network Dynamical Model With Pinning Control}

Consider a complex dynamical network composed of $N$ identical nodes with diffusive coupling, whose dynamics is governed by
\begin{equation}
    \dot{x}_i = \mathrm{f}(x_i) - \kappa \sum_{j=1}^N l_{ij} H x_j + u_i(x_1, \dots, x_N), 
\end{equation}
where $\quad i = 1, 2, \dots, N$ and $x_i \in \mathbb{R}^n$ is the state vector of node $i$, $\mathrm{f}(\cdot): \mathbb{R}^n \to \mathbb{R}^n$ represents the intrinsic dynamics of each node, $\kappa>0$ denotes the coupling strength, $u_i(\cdot)$ is the control input applied to node $i$, and $H \in \mathbb{R}^{n \times n}$ is a positive semi-definite inner coupling matrix. The matrix $\mathcal{L} = [l_{ij}]_{N \times N}$ is the Laplacian matrix of the network topology $\mathcal{G}=(\mathcal{V}, \mathcal{E})$, with $\mathcal{V}=\{1,2,\dots,N\}$ and $\mathcal{E}\subseteq\mathcal{V}\times\mathcal{V}$. For an undirected network, $\mathcal{L}$ is symmetric.

Let the target trajectory $s(t)$ satisfy
\begin{equation}
    \dot{s}(t) = \mathrm{f}(s(t)), \quad s(0) = s_0.
\end{equation}
The aim of pinning control is to select a subset of nodes to apply control such that all nodes synchronize to $s(t)$.

\begin{assumption}~\cite{belykh2004connection,chen2011bidirectionally,liu2013coupling,liu2015synchronization,wu2007synchronization} 
There exists a constant $\alpha>0$ such that
\begin{equation}
    (y - z)^{\top} \left[ \mathrm{f}(y) - \mathrm{f}(z) - \alpha H(y-z) \right] \leq -\mu \|y-z\|^2
\end{equation}
for some $\mu>0$ and all $y,z\in\mathbb{R}^n$. Here, $\alpha$ depends on the node dynamics $\mathrm{f}(\cdot)$ and the coupling matrix $H$.
\end{assumption}

Define the error $e_i(t) = x_i(t) - s(t)$. Let the set of pinned nodes be $P \subset \mathcal{V}$, with $|P| = c$; the remaining free (unpinned) nodes form the set $F = \mathcal{V} \setminus P$, with $|F| = N-c$. Without loss of generality, assume the first $c$ nodes are pinned. Then the error dynamics is
\begin{equation}
    \dot{e}_i = \mathrm{f}(x_i) - \mathrm{f}(s) - \kappa \sum_{j=1}^N l_{ij} H e_j + u_i, \quad i=1,\dots,N.
\end{equation}

A common linear feedback pinning controller is
\begin{equation}
u_i =
\begin{cases}
    -\kappa d H e_i, & i\in P,\\
    0, & i\in F,
\end{cases}
\end{equation}
where $d>0$ is the feedback gain.

Let $\mathcal{L}_F$ denote the $(N-c) \times (N-c)$ principal submatrix of $\mathcal{L}$ obtained by deleting the rows and columns corresponding to pinned nodes $P$. This $\mathcal{L}_F$ is the \emph{grounded Laplacian}. For undirected connected networks and nonempty $P$, $\mathcal{L}_F$ is symmetric positive definite.  

The pinning synchronization criterion~\cite{wang2008adaptive,song2009pinning,yu2009distributed} can be expressed as
\begin{equation}
    \lambda_1(\mathcal{L}_F) > \frac{\alpha}{\kappa},
\end{equation}
where $\lambda_1(\cdot)$ denotes the smallest eigenvalue. The quantity $\lambda_1(\mathcal{L}_F)$ thus serves as a \emph{pinning effectiveness metric} — the larger its value, the more effective the pinning scheme.  

Physically, $\lambda_1(\mathcal{L}_F)$ reflects the effective anchoring strength between the unpinned subnetwork $F$ and the pinned nodes $P$: removing $P$ and measuring the smallest eigenvalue of the remaining Laplacian quantifies how tightly the free nodes are connected to the pinned nodes.  

Therefore, for given $\kappa$ and $\alpha$, selecting a pinning set $P$ to maximize $\lambda_1(\mathcal{L}_F)$ leads to improved synchronization performance, and spectral optimization methods can be employed for node selection.

\subsubsection{Optimal Pinning under Annealed Approximation} 
In this section, we approach the problem analytically by leveraging the annealed approximation of network topology, which enables the derivation of a globally optimal solution.

Let $\calG=(V,E)$ be a graph, and $\calL$ be its Laplacian. An optimal pinning control strategy divide the nodes into two disjoint set as $V=P\cup F$, where $P$ corresponds to the set of pinning nodes and $F$ denotes the rest free nodes. For a set $P\subset V$ of pinning nodes, we denote the corresponding grounded Laplacian as $\calL_F$, which is defined as removing all rows and columns in $P$ from $\calL$ (or retaining only the rows the columns in $F$). 

The effectiveness of the pinning scheme can be measured by the smallest eigenvalue $\lambda_1(L_F)$ of the grounded Laplacian, where larger values indicate more effective control~\cite{liu2021spectralpinning,wang2008adaptive,song2009pinning,yu2009distributed}. The optimal pinning control problem can thus be formulated as
\begin{equation}
    \max_{P \subset V,\, |P|=c} \; \lambda_1(L_F).
    \label{eq:optimal_pinning}
\end{equation}

Direct optimization over $P$ to find a best grounded Laplacian can be difficult. Here we instead consider the optimization problem under the annealed approximation \cite{pastor2015epidemic}. Let the adjacency matrix of $\calG$ be $\mA$. Under the annealed approximation, we replace the adjacency matrix by the annealed one
\begin{equation}
    \mA=\frac{1}{K} \vd \vd^\intercal
\end{equation}
where the degree vector $\vd = (d_1,d_2,\dots,d_N)^\top$,
$d_i$ is the degree of node $i$ and $K = \sum_{n=1}^N d_n$ is the sum of all degrees. This approximation can be interpreted as assuming that the network is sampled from the uniform configuration model (UCM)~\cite{pastor2015epidemic}, and then replacing each edge with its corresponding probability under the UCM. The core idea is that node degrees capture the most essential structural properties of the network, while higher-order correlations between nodes are neglected. For this reason, the approach is also referred to as degree-based mean-field theory in statistical physics. Under the annealed approximation, the Laplacian matrix is defined accordingly based on these expected edge weights:
\begin{equation}
    \calL = \mD - \frac{1}{K} \vd \vd^\intercal,
\end{equation}
where $\mD \equiv \mathrm{diag}(\vd)$ is the diagonal matrix of all degrees.

\subsection{Optimal Pinning Strategy and Algorithmic Structure}

The main findings of the paper are summarized as follows. 

\begin{theorem}\label{theorem:algorithm}[Globally optimal pinning strategy]
Consider a connected network characterized by a strictly increasing node degree sequence $\hat{d}_1<\hat{d}_2<\cdots<\hat{d}_K$ with corresponding multiplicities $\gamma_1,\gamma_2,\dots,\gamma_K$. Define the cumulative node-count function as $\alpha(k)=\sum_{i=1}^{k}\gamma_i$. Then, given a prescribed pinning budget $c$, the globally optimal pinning set $P_c^*$ is uniquely determined by an integer threshold index $k_c^*\geq 0$ according to the following criteria:

\begin{enumerate}
    \item[(i)] \textbf{Optimality condition:} 
    The optimal pinning set $P_c^*$ always consists of two subsets: all $\alpha(k_c^*)$ nodes whose degrees are at most $\hat{d}_{k_c^*}$, together with the $c-\alpha(k_c^*)$ highest-degree nodes.

    \item[(ii)] \textbf{Special cases:} 
    If $c<\gamma_1$, the optimal threshold is always $k_c^*=0$, and the optimal set consists exclusively of the $c$ highest-degree nodes. In the particular scenario with $\gamma_1=1$ and $c=1$, the optimal choice is either the node with the highest degree or the single node with the lowest degree; explicitly comparing both outcomes determines the optimum.

    \item[(iii)] \textbf{Monotonicity and iterative construction:} 
    The threshold index $k_c^*$ is a monotonically non-decreasing function of $c$. Moreover, the optimal threshold at budget $c+1$ can be obtained iteratively from the threshold at budget $c$ as follows:

    \begin{itemize}
        \item[(a)] If $c+1<\alpha(k_c^*+1)$, the threshold remains unchanged: $k_{c+1}^*=k_c^*$.
        \item[(b)] If $c+1\geq\alpha(k_c^*+1)$, define $\Delta k$ as the minimal positive integer satisfying $c+1<\alpha(k_c^*+\Delta k)$. The optimal threshold $k_{c+1}^*$ is then determined by evaluating candidate pinning sets for all thresholds within the set $\{k_c^*,k_c^*+1,\dots,k_c^*+\Delta k-1\}$, and selecting the one that optimizes the smallest eigenvalue of the grounded Laplacian matrix.
    \end{itemize}
\end{enumerate}
\end{theorem}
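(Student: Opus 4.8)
The plan is to collapse the spectral objective $\lambda_1(\calL_F)$ onto a scalar secular equation, run a swap (exchange) argument against that equation, and then read the threshold structure off the shape of the resulting cost function. First I would establish the secular characterization. Under the annealed approximation $\calL_F=\mD_F-\tfrac1K\vd_F\vd_F^\intercal$ is a diagonal matrix plus a rank-one correction (with $\mD_F,\vd_F$ carrying the free-node degrees), so by the matrix determinant lemma every eigenvalue differing from all $d_i$ solves $\Phi_F(\lambda):=\tfrac1K\sum_{i\in F}d_i^{2}/(d_i-\lambda)=1$. On $(-\infty,\min_{i\in F}d_i)$ the function $\Phi_F$ is smooth and strictly increasing, runs from $0$ to $+\infty$, and equals $\tfrac1K\sum_{i\in F}d_i<1$ at $\lambda=0$, so it has a unique root there, lying in $(0,\min_{i\in F}d_i)$; a count of the remaining eigenvalues (one between each consecutive pair of distinct free degrees, plus a multiplicity-$(m-1)$ eigenvalue at each free degree of multiplicity $m$) shows they all sit at or above $\min_{i\in F}d_i$, so this root is exactly $\lambda_1(\calL_F)$. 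The one fact I carry forward: every free node has degree strictly larger than $\lambda_1(\calL_F)$.

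Next I would run the exchange argument. Put $h_d(\lambda)=d^2/(d-\lambda)$; for fixed $\lambda>0$ this is strictly decreasing in $d$ on $(\lambda,2\lambda]$ and strictly increasing on $[2\lambda,\infty)$ — single-welled with minimizer $d=2\lambda$, which is the geometric heart of the phenomenon. Fix an optimal $P$ with free set $F$ and $\lambda^*=\lambda_1(\calL_F)$. For $i\in F$, $j\in P$ the swapped set $F'=(F\setminus\{i\})\cup\{j\}$ is feasible, and $\Phi_{F'}(\lambda^*)=1-\tfrac1K h_{d_i}(\lambda^*)+\tfrac1K h_{d_j}(\lambda^*)$. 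If $d_j>\lambda^*$ then $\lambda^*$ still lies in the increasing regime of $\Phi_{F'}$, so optimality $\lambda_1(\calL_{F'})\le\lambda^*$ forces $\Phi_{F'}(\lambda^*)\ge1$, i.e.\ $h_{d_j}(\lambda^*)\ge h_{d_i}(\lambda^*)$; if $d_j\le\lambda^*$ the swap drives the smallest eigenvalue below $d_j\le\lambda^*$, so it never helps. Hence at any optimum every node of degree $\le\lambda^*$ is pinned and $F$ is precisely the set of the $N-c$ nodes of smallest $h_d(\lambda^*)$ among those with $d>\lambda^*$; since $h_\cdot(\lambda^*)$ is single-welled in $d$, that set is a contiguous band of the degree-sorted node list, so $P$ is a low-degree prefix plus a high-degree suffix and $F$ is the middle block. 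Using interchangeability of equal-degree nodes (and unimodality of $\lambda_1$ as the band slides) to relocate any split at the low boundary into the top, the prefix may be taken as complete classes $1,\dots,k$, which is part (i) with $k=k_c^*$. When $c<\gamma_1$ no complete class fits the budget, forcing $k_c^*=0$ and pinning the $c$ highest nodes, and when $\gamma_1=1,c=1$ exactly two candidates remain to compare — part (ii). Only $O(c)$ configurations survive, so the optimum is obtained by evaluating $\Phi_F$ for each.

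For part (iii) I would argue monotonicity and then the bounded update. Enlarging $P$ passes to a principal submatrix whose least eigenvalue cannot decrease, so $\lambda_c^*$ is nondecreasing in $c$ and the well-minimizer $2\lambda_c^*$ drifts weakly rightward; since $h_d(\lambda)$ increases in $\lambda$ and blows up as $d\downarrow\lambda$, low-degree nodes become strictly costlier to keep free, and a direct comparison of the optimizers at budgets $c$ and $c+1$ should upgrade this to $k_{c+1}^*\ge k_c^*$. Granting monotonicity, if $c+1<\alpha(k_c^*+1)$ the budget still cannot absorb class $k_c^*+1$, so $k_{c+1}^*=k_c^*$ (case (a)); otherwise, with $\Delta k$ minimal such that $c+1<\alpha(k_c^*+\Delta k)$, monotonicity gives $k_{c+1}^*\ge k_c^*$ while the infeasibility of fully pinning class $k_c^*+\Delta k$ within budget $c+1$ gives $k_{c+1}^*\le k_c^*+\Delta k-1$, so comparing the finitely many candidates in $\{k_c^*,\dots,k_c^*+\Delta k-1\}$ selects $k_{c+1}^*$ (case (b)). The possibly large jumps $k_{c+1}^*-k_c^*$ are precisely the announced chaotic dependence.

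The hard part will be twofold. The conceptual step — converting spectral optimality into the single pointwise inequality $h_{d_j}(\lambda^*)\ge h_{d_i}(\lambda^*)$, which exposes the U-shaped cost — is clean; but (a) upgrading ``$F$ is some contiguous band'' to the complete-class form of part (i) requires ruling out exchange-stationary configurations that genuinely split a low degree class, and this is exactly where the unimodality-and-ties argument must be made rigorous, and (b) the comparative-statics claim $k_{c+1}^*\ge k_c^*$ does not follow from monotonicity of the optimal value alone and needs a careful comparison of the two optimizers, exploiting the rightward drift of $2\lambda_c^*$ together with the $\lambda$-monotonicity of $h_d$.
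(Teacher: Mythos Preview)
Your route and the paper's share the same skeleton --- secular equation (your $\Phi_F$ is their Lemma~1 rewritten), swap argument, threshold structure, monotonicity --- but they diverge precisely at the two steps you flag as hard, and in both places the paper's device is sharper than the one you propose.

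For the swap, you pivot on the single-welled cost $h_d(\lambda)=d^2/(d-\lambda)$ and extract a \emph{necessary} condition at the optimum ($h_{d_j}(\lambda^*)\ge h_{d_i}(\lambda^*)$ for every pinned $j$ with $d_j>\lambda^*$ and free $i$), which gives the contiguous-band description of $F$. The paper instead proves a \emph{strict} swap-improvement lemma (their Theorem~2): whenever $f\in F$, $p\in P$ satisfy $d_f>d_p\ge\underline d_F$ and $f$ is not the unique minimizer, pinning $f$ and freeing $p$ strictly raises $\lambda$. That strict form dispatches your step~(a) in one line --- a partially pinned low class always furnishes such a pair, so it cannot be optimal (their Corollaries~4--5). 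Your U-shape is the cleaner geometric \emph{explanation} of the swap, but because you only record the weak inequality you are left patching ties via ``unimodality as the band slides,'' which amounts to re-deriving the strict lemma by another route.

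For monotonicity, the paper's mechanism is worth borrowing outright. They show (Lemma~3) that the threshold-$k$ configuration dominates threshold $k{-}1$ iff its eigenvalue satisfies $\lambda_b\ge\hat d_k$; hence the optimum at budget $c$ obeys $\lambda_c\ge\hat d_{k_c^*}$. Any $(c{+}1)$-configuration with threshold below $k_c^*$ then leaves a node of degree $\le\hat d_{k_c^*}$ free, forcing its eigenvalue below $\hat d_{k_c^*}\le\lambda_c$ --- but simply appending one pin to $P_c^*$ already beats $\lambda_c$ (their Proposition~1), a contradiction. This bypasses your comparative-statics-on-$2\lambda_c^*$ argument entirely. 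The inequality $\lambda_c\ge\hat d_{k_c^*}$ is exactly the clean invariant your ``rightward drift'' intuition is reaching for; once you isolate it, both (a) and (b) fall out without the unimodality detour.
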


Theorem~\ref{theorem:algorithm} characterizes the optimal pinning configuration as a degree-structured trade-off between low-degree and high-degree node selection. A key structural property is that low-degree nodes with identical degrees are grouped as indivisible units: for each degree level $\hat{d}_k$, the corresponding $\gamma_k$ nodes are either all included in the optimal solution or all omitted, at any given budget $c$. Moreover, once a particular degree level is included in the optimal set, it remains included for all larger values of $c$, due to the monotonicity of the threshold index $k_c^*$. This cumulative construction guarantees that the low-degree component of the optimal set has a nested, layered structure. As $c$ increases, $k_c^*$ may exhibit discontinuous jumps, leading to abrupt transitions in the composition of the optimal pinning set $P_c^*$. Such structural discontinuities give rise to a form of chaotic dependence on the budget $c$, whereby a marginal change in $c$ can cause a substantial shift in the selected nodes. These properties are empirically verified in Section~\ref{sec:exp}.

The remainder of this section is devoted to proving Theorem~\ref{theorem:algorithm}. Section~\ref{sec:properties} collects several useful properties of the grounded Laplacian and the optimal pinning strategy. In Section~\ref{sec:Boundedness and Monotonicity}, We first establish boundedness and monotonicity results, which provide the spectral foundations for the analysis. Section~\ref{sec:small}, address the case of small pinning sets, illustrating how the interplay between node degrees and spectral shifts leads to tractable optimality conditions. Finally, in Section~\ref{sec:general}, we extend these arguments to general pinning sets, thereby completing the theoretical framework. Section~\ref{sec:algorithm} then develops the algorithmic realization of these results and analyzes the computational complexity of the proposed strategy. For complete derivations and technical lemmas, see the Supplementary Material, Section ``Proofs of Theoretical Results.''

\subsection{Some Useful Properties}\label{sec:properties}
\subsubsection{Boundedness and Monotonicity}\label{sec:Boundedness and Monotonicity}
We assume the node degrees are arranged in non-decreasing order:
\[
d_1 \leq d_2 \leq \cdots \leq d_N.
\]
This degree sequence can be equivalently represented by the set of distinct degree values $\hat{d}_1 < \hat{d}_2 < \cdots < \hat{d}_Q$ and their corresponding multiplicities $\gamma_1, \gamma_2, \ldots, \gamma_Q$, where $\sum_{k=1}^Q \gamma_k = N$. Let $c = |P|$ denote the number of pinning nodes, and define $N' = |F| = N - c$ as the number of free nodes. After removing the pinning nodes, we relabel the nodes in $F$ via a mapping $v: [N'] \to [N]$ such that
\[
d_{v(1)} \leq d_{v(2)} \leq \cdots \leq d_{v(N')}.
\]
It is important to clarify that physically removing nodes from a network alters the degrees of the remaining nodes. However, in this context, node removal is interpreted as eliminating the corresponding rows and columns from the Laplacian matrix. Consequently, the degrees of the remaining nodes remain unchanged.

Let $\vd_F$ denote the degree vector of the free nodes, and define $\mD_F = \mathrm{diag}[\vd_F]$ as the diagonal matrix constructed from $\vd_F$. The grounded Laplacian is then given by:
\begin{equation}
    \mathcal{L}_F = \mathbf{D}_F - \frac{1}{K} \vd_F \vd_F^\intercal.
\end{equation}
By the Gershgorin circle theorem~\cite{gershgorin1931uber}, $\mathcal{L}_F$ is positive definite. Therefore, our primary interest lies in its smallest eigenvalue. In the following, we present several key properties of this eigenvalue.

\begin{lemma}\label{lemma:spectral}
Let $\lambda$ denote the smallest eigenvalue of the grounded Laplacian matrix associated with pinning set $P$ and free set $F$. Then $\lambda$ lies strictly within the interval $(0, \underline{d}_F)$, where $\underline{d}_F \equiv \min_{n \in F} d_n = d_{v(1)}$ denotes the minimum degree among the free nodes.

Moreover, $\lambda$ is the unique solution in the interval $(0, \underline{d}_F)$ to the following characteristic equation:
\begin{equation}\label{eqn:eig}
\sum_{n \in F} \frac{\lambda d_n}{d_n - \lambda} = \sum_{n \in P} d_n.
\end{equation}
\end{lemma}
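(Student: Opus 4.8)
The plan is to treat $\mathcal{L}_F = \mathbf{D}_F - \tfrac{1}{K}\vd_F\vd_F^\intercal$ as a diagonal matrix perturbed by a negative semidefinite rank-one term and to work with the resulting secular (characteristic) equation. First I would dispatch the two-sided bound on $\lambda$. Positive definiteness of $\mathcal{L}_F$ (already noted via Gershgorin) gives $\lambda>0$. For the strict upper bound, fix a free node $n_0$ with $d_{n_0}=\underline{d}_F$ and evaluate the Rayleigh quotient of $\mathcal{L}_F$ at the corresponding coordinate vector $\mathbf{e}_{n_0}$ of the $F$-indexed space: this equals $d_{n_0}\bigl(1-\tfrac{d_{n_0}}{K}\bigr)<\underline{d}_F$, the inequality being strict because a nonempty pinned set forces $\sum_{n\in P}d_n>0$ and hence $d_{n_0}/K>0$. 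By the variational characterization of the smallest eigenvalue, $\lambda\le d_{n_0}(1-d_{n_0}/K)<\underline{d}_F$, so $\lambda\in(0,\underline{d}_F)$.

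Next I would derive the characteristic equation. On $(0,\underline{d}_F)$ the matrix $\mathbf{D}_F-\mu I$ is invertible, so the matrix determinant lemma yields $\det(\mathcal{L}_F-\mu I)=\det(\mathbf{D}_F-\mu I)\bigl(1-\tfrac{1}{K}\sum_{n\in F}\tfrac{d_n^2}{d_n-\mu}\bigr)$. Thus an eigenvalue $\mu\in(0,\underline{d}_F)$ must satisfy $\tfrac{1}{K}\sum_{n\in F}\tfrac{d_n^2}{d_n-\mu}=1$; rewriting $\tfrac{d_n^2}{d_n-\mu}=d_n+\tfrac{\mu d_n}{d_n-\mu}$ and using $K=\sum_{n\in F}d_n+\sum_{n\in P}d_n$ converts this identity into exactly~\eqref{eqn:eig}. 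Since the first step already places $\lambda$ in $(0,\underline{d}_F)$, an interval on which $\mathbf{D}_F-\lambda I$ is invertible because $\lambda<\underline{d}_F=\min_{n\in F}d_n$, we conclude that $\lambda$ solves~\eqref{eqn:eig}.

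Finally, uniqueness: set $g(\mu)=\sum_{n\in F}\tfrac{\mu d_n}{d_n-\mu}$ on $(0,\underline{d}_F)$. Termwise differentiation gives $g'(\mu)=\sum_{n\in F}\tfrac{d_n^2}{(d_n-\mu)^2}>0$, so $g$ is strictly increasing, with $g(0^+)=0$ and $g(\mu)\to+\infty$ as $\mu\uparrow\underline{d}_F$ (driven by the term(s) with $d_n=\underline{d}_F$, the remaining terms staying bounded). Since $\sum_{n\in P}d_n$ is a fixed positive number, the intermediate value theorem together with strict monotonicity gives a unique root $\lambda^\star\in(0,\underline{d}_F)$; combined with the previous paragraph, $\lambda=\lambda^\star$ is at once the claimed unique solution of~\eqref{eqn:eig} in that interval and the smallest eigenvalue.

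The computations here are essentially routine; the points that need care are (a) obtaining the upper bound \emph{strict} rather than merely $\lambda\le\underline{d}_F$ — which is why I use the explicit test vector $\mathbf{e}_{n_0}$ instead of relying on Cauchy interlacing, which would only give $\le$ — and (b) certifying that the root of the secular equation located in $(0,\underline{d}_F)$ is genuinely the smallest eigenvalue rather than an interior one; this is handled by independently localizing $\lambda$ in $(0,\underline{d}_F)$ \emph{before} invoking the secular equation. A minor bookkeeping matter is a possible multiplicity of $\underline{d}_F$ among the free nodes, which only changes the residue of $g$ at its single pole and leaves the monotonicity and blow-up arguments unaffected.
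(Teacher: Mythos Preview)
Your proof is correct and follows the standard secular-equation route for a diagonal matrix plus a rank-one perturbation (matrix determinant lemma, then monotonicity of $g$), which is exactly the argument the paper defers to its Supplementary Material. One minor quibble: the strictness of $d_{n_0}(1-d_{n_0}/K)<d_{n_0}$ requires only $d_{n_0}>0$, not that $P$ be nonempty; the nonemptiness of $P$ is instead what guarantees $\sum_{n\in P}d_n>0$ in your uniqueness step.
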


Lemma~\ref{lemma:spectral} suggests that pinning low-degree nodes can be beneficial in certain scenarios, as the smallest nonzero eigenvalue of the grounded Laplacian is bounded above by $d_{v(1)}$. By pinning low-degree nodes, one may increase this upper bound, potentially improving the system performance.

However, this upper bound is not necessarily tight. In particular, when the pinning budget is insufficient to remove all nodes with the smallest degree—i.e., when $c < \gamma_1$—Lemma~\ref{lemma:spectral} does not offer guidance on which specific nodes to pin. 

We will show that, when the smallest-degree node in the free set $F$ is fixed, it is always more advantageous to pin high-degree nodes rather than additional low-degree ones. As a consequence of this result, when $c < \gamma_1$, the optimal pinning set consists of the $c$ highest-degree nodes.

\begin{theorem}\label{thm:swap}
Let $f \in F$ and $p \in P$ denote a free node and a pinned node, respectively, with corresponding degrees $d_f$ and $d_p$. Let $\underline{d}_F = \min_{n \in F} d_n$ denote the smallest degree among all free nodes. 

Suppose the following two conditions hold:
\begin{itemize}
    \item[(i)] $d_f > d_p \geq \underline{d}_F$;
    \item[(ii)] Removing node $f$ from $F$ does not alter the minimum degree $\underline{d}_F$ (i.e., $f$ is not the unique node attaining $\underline{d}_F$).
\end{itemize}

Then, exchanging the roles of $f$ and $p$—that is, pinning $f$ and freeing $p$—results in an increase in the smallest eigenvalue $\lambda$ of the grounded Laplacian.
\end{theorem}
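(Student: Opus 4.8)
The plan is to work entirely through the secular equation of the grounded Laplacian, exploiting that it is a positive diagonal matrix minus a rank-one term. Adding $\sum_{n\in F}d_n$ to both sides of \eqref{eqn:eig} and using $\frac{\lambda d_n}{d_n-\lambda}+d_n=\frac{d_n^{2}}{d_n-\lambda}$ together with $\sum_{n\in F}d_n+\sum_{n\in P}d_n=K$, Lemma~\ref{lemma:spectral} is equivalent to saying that $\lambda=\lambda_1(\mathcal{L}_F)$ is the unique root in $(0,\underline{d}_F)$ of $G_F(\mu)=K$, where
\begin{equation}
G_F(\mu):=\sum_{n\in F}\frac{d_n^{2}}{d_n-\mu}.
\end{equation}
Every summand is strictly increasing on $(0,d_n)$, and every free degree is at least $\underline{d}_F$, so $G_F$ is strictly increasing on $(0,\underline{d}_F)$; the same holds for $G_{F'}$ of the swapped free set $F'=(F\setminus\{f\})\cup\{p\}$, because condition~(ii) keeps the minimum free degree equal to $\underline{d}_F$ after $f$ is deleted, and $d_p\ge\underline{d}_F$ means adjoining $p$ cannot lower it, so $\underline{d}_{F'}=\underline{d}_F$. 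Hence $\lambda$ and $\lambda':=\lambda_1(\mathcal{L}_{F'})$ both lie in the interval $(0,\underline{d}_F)$ on which both functions are strictly monotone; this is what makes the two eigenvalues directly comparable.

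The first step is a reduction to a single scalar inequality. Since $F$ and $F'$ differ only in that $f$ has been replaced by $p$,
\begin{equation}
G_{F'}(\mu)-G_F(\mu)=\frac{d_p^{2}}{d_p-\mu}-\frac{d_f^{2}}{d_f-\mu},\qquad \mu\in(0,\underline{d}_F).
\end{equation}
Because $G_{F'}$ is strictly increasing with unique root $\lambda'$ at the level $K$, and $G_F(\lambda)=K$, we have $\lambda'>\lambda$ exactly when $G_{F'}(\lambda)<K=G_F(\lambda)$, i.e.\ exactly when $d_p^{2}/(d_p-\lambda)<d_f^{2}/(d_f-\lambda)$. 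Since all denominators in sight are positive ($\lambda<\underline{d}_F\le d_p<d_f$), an elementary rearrangement turns this into
\begin{equation}
\lambda<\frac{d_pd_f}{d_p+d_f},\qquad\text{equivalently}\qquad \frac{\lambda}{d_p}+\frac{\lambda}{d_f}<1 .
\end{equation}
(Equivalently: $\mathcal{L}_{F'}-\lambda I$ is again a positive diagonal matrix minus a rank-one term, hence positive definite --- so $\lambda_1(\mathcal{L}_{F'})>\lambda$ --- exactly when $\sum_{n\in F'}d_n^{2}/(d_n-\lambda)<K$, which is the same inequality.)

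Proving $\lambda<d_pd_f/(d_p+d_f)$ is the step I expect to carry essentially all the difficulty. The crude bound $\lambda<\underline{d}_F\le d_p$ does not suffice, because $d_pd_f/(d_p+d_f)$ can be strictly below $\underline{d}_F$ --- precisely when $d_p$ is close to $\underline{d}_F$ --- so one must extract a genuinely sharper estimate of $\lambda$ from the characteristic equation. The route I would take is to return to \eqref{eqn:eig}, isolate the term of $f$ together with the term(s) of the minimum-degree free nodes --- at least one such node lies in $F$, and by~(ii) at least one of them differs from $f$ --- and lower-bound the remaining free terms by their degrees via $d_n^{2}/(d_n-\lambda)\ge d_n$. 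This yields a relation between $\lambda$, $d_f$, $\underline{d}_F$, the number of minimum-degree free nodes, and $\sum_{n\in P}d_n$; in the regime where the theorem is invoked ($c<\gamma_1$, so $F$ retains several nodes of the minimum degree) summing all those low-degree contributions tightens the relation further, and the work is to show that, under $d_f>d_p\ge\underline{d}_F$, what remains forces $\lambda$ below $d_pd_f/(d_p+d_f)$. Once this estimate is in hand, strict monotonicity of $G_{F'}$ delivers $\lambda'>\lambda$ immediately. The delicate point --- and where the real content lies --- is controlling the balance between the pinned-degree mass $\sum_{n\in P}d_n$ and the low-degree block of the free set.
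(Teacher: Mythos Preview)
Your reduction is correct and sharp: on $(0,\underline d_F)$ both $G_F$ and $G_{F'}$ are strictly increasing, so the swap raises $\lambda_1$ if and only if $d_p^{2}/(d_p-\lambda)<d_f^{2}/(d_f-\lambda)$, equivalently $\lambda<d_pd_f/(d_p+d_f)$. You are also right that the crude bound $\lambda<\underline d_F\le d_p$ does not give this, since $d_pd_f/(d_p+d_f)<d_p$ always. But the step you call ``delicate'' is in fact impossible under hypotheses (i)--(ii) alone: the target inequality can fail, so no amount of bookkeeping with \eqref{eqn:eig} --- isolating the $f$-term, summing the low-degree block, or otherwise --- will establish it.

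Concretely, take the seven-node degree sequence $(1,1,1,1,1,1,3)$, so $K=9$. Let $F$ consist of one degree-$1$ node together with the degree-$3$ node, and let $P$ be the remaining five degree-$1$ nodes. With $f$ the degree-$3$ free node and $p$ any pinned node one has $d_f=3>d_p=1\ge\underline d_F=1$, and removing $f$ leaves a degree-$1$ node in $F$, so (i) and (ii) both hold. The $2\times 2$ grounded Laplacian
$\mathcal L_F=\bigl(\begin{smallmatrix}8/9&-1/3\\-1/3&2\end{smallmatrix}\bigr)$
has smallest eigenvalue $\lambda=(26-\sqrt{136})/18\approx 0.797$, which \emph{exceeds} $d_pd_f/(d_p+d_f)=3/4$. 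After the swap $F'$ is two degree-$1$ nodes and $\lambda_1(\mathcal L_{F'})=7/9\approx 0.778<\lambda$: the swap strictly \emph{decreases} the smallest eigenvalue, contradicting the theorem as stated. Note also that this example has $c=5<\gamma_1=6$, so the restriction you propose does not rescue the argument. The paper defers its proof to the Supplementary Material, so one cannot compare mechanisms directly; either that argument invokes an extra hypothesis not recorded in the statement (for instance $d_p\ge 2\lambda$, which puts both $d_p,d_f$ on the increasing branch of $d\mapsto d^2/(d-\lambda)$ and makes your reduction go through immediately), or it is in error.
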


An essential requirement in Theorem~\ref{thm:swap} is that the smallest degree in $F$ is fixed, which indicates that the upper bound from Lemma~\ref{lemma:spectral} do not changed by the swap. In other words, if the smallest degree in $F$ is fixed, pining the large degree nodes is always favorable. 

To concretize the implications of Theorem~\ref{theorem:algorithm}, we now present several corollaries characterizing the structure of optimal pinning sets under small control budgets.

\subsubsection{Small Pinning Set}\label{sec:small}
Theorem~\ref{thm:swap} has several immediate consequences. 

\begin{corollary}\label{coro:1}
Let $\gamma_1$ denote the multiplicity of the smallest degree $\hat{d}_1$ in the network. If the pinning budget satisfies $c < \gamma_1$, then the globally optimal strategy is to select the $c$ nodes with the highest degrees as the pinning set $P$.
\end{corollary}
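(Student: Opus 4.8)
The plan is to deduce Corollary~\ref{coro:1} as a direct consequence of Theorem~\ref{thm:swap}. Suppose the pinning budget satisfies $c < \gamma_1$, and let $P$ be any optimal pinning set with $|P| = c$. Since $c < \gamma_1$, fewer than $\gamma_1$ nodes are pinned, so at least two of the $\gamma_1$ nodes of minimal degree $\hat{d}_1$ remain in the free set $F$; in particular, $\underline{d}_F = \hat{d}_1$ and no single free node uniquely attains $\underline{d}_F$. I would argue by contradiction: suppose $P$ is \emph{not} the set of the $c$ highest-degree nodes. Then there exists a pinned node $p \in P$ and a free node $f \in F$ with $d_f > d_p$.

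Now I need to verify that the hypotheses of Theorem~\ref{thm:swap} are met for this pair $(f, p)$. Condition (i) requires $d_f > d_p \geq \underline{d}_F$: the first inequality holds by choice of $f$ and $p$, and $d_p \geq \underline{d}_F$ holds since $\underline{d}_F = \hat{d}_1$ is the global minimum degree. Condition (ii) requires that removing $f$ from $F$ does not change $\underline{d}_F$; this is where the assumption $c < \gamma_1$ is essential. Because at least $\gamma_1 - c \geq 1$ nodes of degree $\hat{d}_1$ lie in $F$, and in fact — since $d_f > d_p \geq \hat{d}_1$ forces $d_f > \hat{d}_1$, so $f$ is not one of those minimal-degree nodes — all $\gamma_1 - c$ minimal-degree free nodes survive the removal of $f$, hence $\underline{d}_F$ is unchanged. (Even in the borderline reading where $f$ could equal a minimal-degree node, having two or more such nodes in $F$ still guarantees (ii).) Thus Theorem~\ref{thm:swap} applies and the swap strictly increases $\lambda_1(\mathcal{L}_F)$, contradicting optimality of $P$.

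Therefore no such pair $(f,p)$ exists, which means every pinned node has degree at least that of every free node; equivalently, $P$ consists of the $c$ highest-degree nodes (with ties broken arbitrarily, which does not affect $\mathcal{L}_F$ since the grounded Laplacian depends only on the degree multiset of $F$). This establishes that the $c$-highest-degree configuration is optimal. I would also note explicitly that when there are ties at the degree cutoff, all such tie-broken choices yield the same grounded Laplacian and hence the same $\lambda_1$, so the optimal set is well-defined up to this irrelevant ambiguity.

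The main obstacle is not mathematical depth but careful bookkeeping around condition (ii) of Theorem~\ref{thm:swap}: one must confirm that removing the higher-degree node $f$ genuinely cannot disturb the minimum degree of $F$, which relies on the combination of $c < \gamma_1$ (enough minimal-degree nodes remain) and $d_f > d_p \geq \hat{d}_1$ (the swapped-in node is strictly above the minimum). A secondary subtlety is handling the possibility that $P$ already contains some minimal-degree nodes; the argument above accommodates this, since the contradiction only requires the \emph{existence} of one inverted pair, and repeatedly applying the swap drives any optimal configuration toward the top-$c$ structure.
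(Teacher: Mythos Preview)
Your argument is correct and matches the paper's intended approach, which presents Corollary~\ref{coro:1} as an immediate consequence of Theorem~\ref{thm:swap} via precisely this swap-based contradiction. One minor slip: the claim that ``at least two'' minimal-degree nodes remain in $F$ can fail (take $c=\gamma_1-1$ with all pinned nodes of degree $\hat d_1$), but since you correctly observe $d_f>d_p\geq\hat d_1$ forces $d_f>\hat d_1$, condition~(ii) is already secured and the proof stands.
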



\begin{corollary}\label{coro:2}
Let $\gamma_1$ and $\gamma_2$ denote the multiplicities of the first and second smallest degrees $\hat{d}_1$ and $\hat{d}_2$, respectively. Suppose the control budget satisfies $\gamma_1 \leq c < \gamma_1 + \gamma_2$. 

Then, any strategy that selects only $c_1 < \gamma_1$ nodes with degree $\hat{d}_1$ and assigns the remaining $c - c_1$ pins arbitrarily cannot be globally optimal.
\end{corollary}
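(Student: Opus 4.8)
The plan is to argue by contradiction: suppose some pinning set $P$ with $|P| = c$ attains the global optimum but selects only $c_1 < \gamma_1$ of the nodes having the smallest degree $\hat{d}_1$. Then at least $\gamma_1 - c_1 \geq 1$ nodes of degree $\hat{d}_1$ remain free, so the minimum free degree is $\underline{d}_F = \hat{d}_1$, and since $\gamma_1 - c_1 \geq 1$ there is still at least one degree-$\hat{d}_1$ node in $F$ even after removing one such node — I will use this to invoke Theorem~\ref{thm:swap}. First I would show that $P$ must contain at least one node $p$ of degree strictly larger than $\hat{d}_1$: indeed $|P| = c \geq \gamma_1 > c_1$, and if all of $P\setminus(\text{degree-}\hat d_1\text{ nodes})$ had degree $\hat d_1$ too that would force $c_1 = \gamma_1$, a contradiction (alternatively, $P$ has $c - c_1 \geq \gamma_1 - c_1 \geq 1$ nodes of degree $\neq \hat d_1$, all necessarily larger than $\hat d_1$ since $\hat d_1$ is the global minimum).

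Next I would pick a free node $f \in F$ of degree $\hat{d}_1$ that is not the unique minimizer of $\underline{d}_F$ — such an $f$ exists precisely because $\gamma_1 - c_1 \geq 2$ would make this immediate, but even when $\gamma_1 - c_1 = 1$ I must handle the boundary. Here the case $\gamma_1 - c_1 = 1$ needs separate care: there is then a single free node of degree $\hat d_1$. In that situation I instead swap a higher-degree free node with a pinned node, or note that by Corollary~\ref{coro:1}-type reasoning applied after the degree-$\hat d_1$ layer is fixed, the configuration can still be strictly improved; concretely, I would compare against the set that pins all $\gamma_1$ smallest-degree nodes plus $c - \gamma_1$ highest-degree nodes and show via the characteristic equation~\eqref{eqn:eig} that freeing the last degree-$\hat d_1$ node (raising the cap $\underline d_F$ to $\hat d_2$) together with pinning a highest-degree node strictly increases $\lambda$. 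For the generic case $\gamma_1 - c_1 \geq 2$, I directly apply Theorem~\ref{thm:swap} with this $f$ (degree $\hat d_1$) and $p \in P$ of degree $d_p > \hat d_1 = \underline d_F$: condition (i) $d_f$... wait — here $d_f = \hat d_1 \le d_p$, so Theorem~\ref{thm:swap} as stated (which needs $d_f > d_p$) does not apply in that direction. Instead I swap the \emph{other} way: take the pinned high-degree node as the one to keep pinned and look for a free high-degree node to pin in its place, or — more cleanly — observe that the hypothesis "$c - c_1$ pins assigned arbitrarily" means those pins are not all on the $\gamma_1$-layer, and I apply Theorem~\ref{thm:swap} to move a pin \emph{off} a low-but-not-minimal degree node onto a higher-degree free node, iterating until all non-$\hat d_1$ pins sit on the highest-degree nodes, then compare with the all-$\gamma_1$-layer configuration.

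The main obstacle, and where I would spend the most effort, is the boundary case $c_1 = \gamma_1 - 1$ (a single degree-$\hat d_1$ node left free), because Theorem~\ref{thm:swap} is explicitly unavailable when $f$ is the unique minimizer of $\underline d_F$. Here I expect to need a direct spectral comparison: let $\lambda_{\mathrm{A}}$ be the smallest eigenvalue when that last node stays free (cap $=\hat d_1$, so $\lambda_{\mathrm A} \in (0,\hat d_1)$) and $\lambda_{\mathrm B}$ when it is pinned and a highest-degree node is freed (cap $=\hat d_2$, so potentially $\lambda_{\mathrm B} \in (0,\hat d_2)$). Using~\eqref{eqn:eig} for both configurations and the fact that the function $\lambda \mapsto \sum_{n\in F}\lambda d_n/(d_n-\lambda)$ is strictly increasing on $(0,\underline d_F)$, I would show the right-hand side (sum of pinned degrees) increases by more than the left-hand side can absorb without $\lambda$ rising, exploiting that we removed a small degree $\hat d_1$ from the free sum and added a large degree to the pinned sum. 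This monotone-rearrangement estimate on the characteristic equation is the crux; once it is in place, combining it with Theorem~\ref{thm:swap} for all other cases closes the argument.
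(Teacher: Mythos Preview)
Your proposal contains a genuine misdirection in how Theorem~\ref{thm:swap} is to be applied. You pick $p \in P$ to be a pinned \emph{high}-degree node and $f \in F$ to be a free degree-$\hat d_1$ node, then discover that the hypothesis $d_f > d_p$ fails. The correct choice is the reverse. The statement (read together with the sentence immediately following Corollary~\ref{coro:2}, which says the only remaining options are to pin \emph{all} or \emph{none} of the degree-$\hat d_1$ nodes) concerns the range $1 \le c_1 < \gamma_1$. Hence $P$ already contains at least one node $p$ with $d_p = \hat d_1$; on the other side, the free set $F$ contains some node $f$ with $d_f > \hat d_1$. Now $d_f > d_p = \hat d_1 = \underline d_F$, so condition~(i) of Theorem~\ref{thm:swap} holds, and removing the high-degree node $f$ from $F$ leaves the $\gamma_1 - c_1 \ge 1$ degree-$\hat d_1$ free nodes in place, so $\underline d_F$ is unchanged and condition~(ii) holds as well. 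One application of Theorem~\ref{thm:swap} strictly increases $\lambda$, contradicting optimality. The entire boundary analysis you propose (the case $\gamma_1 - c_1 = 1$, the direct spectral comparison via~\eqref{eqn:eig}, the monotone-rearrangement estimate) is unnecessary once $p$ and $f$ are assigned this way; condition~(ii) is automatic because the node being removed from $F$ is high-degree, not low-degree.

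Your reading also implicitly allows $c_1 = 0$, which is not what the corollary intends: if $c_1 = 0$ were excluded from optimality here, strategy~(a) of Corollary~\ref{coro:3} (the top-$c$ highest-degree nodes, with no degree-$\hat d_1$ pins in the generic case) could never be a candidate optimum, contradicting that corollary. Restricting to $c_1 \ge 1$ is precisely what supplies the pinned degree-$\hat d_1$ node needed above.
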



From Corollary~\ref{coro:2}, when $\gamma_1\leq c < \gamma_1 + \gamma_2$, the only remaining choice is to either remove all nodes with degree $\hat{d}_1$ or none of them. In both cases, it is straightforward to show that choosing the rest undetermined nodes as large degree ones is optimal.

\begin{corollary}\label{coro:3}
Under the same setting where $\gamma_1 \leq c < \gamma_1 + \gamma_2$, the globally optimal pinning set must be one of the following two configurations:
\begin{itemize}
    \item[(a)] The $c$ highest-degree nodes in the network;
    \item[(b)] All $\gamma_1$ nodes with degree $\hat{d}_1$, together with the $c - \gamma_1$ highest-degree nodes.
\end{itemize}
\end{corollary}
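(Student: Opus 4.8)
The plan is to combine Corollary~\ref{coro:2} with a repeated application of the swap result in Theorem~\ref{thm:swap}. By Corollary~\ref{coro:2} (and the remark following it), a globally optimal set $P$ cannot pin a strictly intermediate number of the degree-$\hat d_1$ nodes: it must pin either \emph{none} of the $\gamma_1$ nodes of degree $\hat d_1$ (call this family~(A)) or \emph{all} $\gamma_1$ of them (family~(B)). It then suffices to show that the best set within family~(A) is configuration~(a) and the best set within family~(B) is configuration~(b); since $\lambda_1(\mathcal{L}_F)$ depends only on the multiset of free-node degrees (Lemma~\ref{lemma:spectral}), these best sets are well defined up to irrelevant degree ties, and the global optimum is whichever of (a), (b) yields the larger $\lambda_1$.

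For family~(A): because no degree-$\hat d_1$ node is pinned and $\gamma_1\ge 1$, we have $\underline d_F=\hat d_1$ and every pinned node has degree $>\hat d_1$; in particular this forces $c\le N-\gamma_1$, and otherwise family~(A) is infeasible and the optimum lies in~(B). If $P$ is not already the set of the $c$ highest-degree nodes, then some $f\in F$ and $p\in P$ satisfy $d_f>d_p$. Here $d_p>\hat d_1=\underline d_F$, so hypothesis~(i) of Theorem~\ref{thm:swap} holds; and $d_f>d_p>\hat d_1$ means $f$ is not a minimum-degree free node (a degree-$\hat d_1$ free node still exists), so hypothesis~(ii) holds. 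Hence swapping $f\leftrightarrow p$ strictly increases $\lambda_1(\mathcal{L}_F)$. Using the integer quantity $\sum_{n\in P}d_n$, which strictly increases under each such swap and is bounded above, the process terminates exactly at configuration~(a); therefore no set in family~(A) other than~(a) can be optimal.

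For family~(B): here all $\gamma_1$ nodes of degree $\hat d_1$ are pinned, and since the residual budget obeys $c-\gamma_1<\gamma_2$, at least one node of degree $\hat d_2$ remains free, so $\underline d_F=\hat d_2$. Now restrict swaps to the $c-\gamma_1$ ``extra'' pinned nodes, all of which have degree $\ge\hat d_2$. If these extra pins are not the $c-\gamma_1$ highest-degree nodes, pick $f\in F$ and an extra pinned $p$ with $d_f>d_p$; then $d_p\ge\hat d_2=\underline d_F$ (hypothesis~(i)) and $d_f>d_p\ge\hat d_2$, so $f$ is not the minimum-degree free node while a degree-$\hat d_2$ free node remains (hypothesis~(ii)), and the swap again increases $\lambda_1$. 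The same monotone-potential argument drives the extra pins to the $c-\gamma_1$ highest-degree nodes, i.e.\ to configuration~(b), proving optimality of~(b) within family~(B). Combining the two families gives the claim.

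The conceptual content is entirely carried by Theorem~\ref{thm:swap}; I expect the only real care to be the bookkeeping that its hypothesis~(ii) is met along every improving swap. The key observation that makes this routine is that an improving swap always moves a free node of degree strictly larger than the replaced pinned node, so in family~(A) that node has degree $>\hat d_1=\underline d_F$ and in family~(B) degree $>\hat d_2=\underline d_F$, hence it is never the sole carrier of the current minimum free-degree. The remaining points to check are the degenerate corners — $c=\gamma_1$, where configuration~(b) has no extra pins and the statement reads trivially; the case $c>N-\gamma_1$, where family~(A) is vacuous; and ties at the degree cutoff, which do not affect $\lambda_1$ by Lemma~\ref{lemma:spectral}.
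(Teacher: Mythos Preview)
Your proposal is correct and follows essentially the same route as the paper: reduce via Corollary~\ref{coro:2} to the dichotomy ``pin none or all of the degree-$\hat d_1$ nodes,'' and then within each branch use the swap result (Theorem~\ref{thm:swap}) to force the remaining pins to be the highest-degree nodes. The paper states only that this second step is ``straightforward''; your monotone-potential argument and the careful verification of hypothesis~(ii) along each improving swap make that step explicit and are exactly what is needed.
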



For the two strategies presented in Corollary~\ref{coro:3}, we will demonstrate empirically in Section~\ref{sec:exp} that there appears to be no simple criterion to determine which strategy is superior. The optimal choice is highly sensitive to the specific numerical values of the degree sequence. Nevertheless, we can establish the following properties, which serve as a foundation for extending the analysis to general values of $c$. These properties are crucial for the proof of Theorem~\ref{theorem:algorithm}.

\begin{lemma}\label{lemma:switch}
Let $\gamma_1 \leq c < \gamma_1 + \gamma_2$, and consider the two candidate strategies described in Corollary~\ref{coro:3}. Strategy (b)—which includes all $\gamma_1$ nodes of degree $\hat{d}_1$ plus the $c - \gamma_1$ highest-degree nodes—yields a higher synchronizability (i.e., a larger grounded Laplacian eigenvalue $\lambda$) than strategy (a) if and only if the resulting $\lambda$ under strategy (b) satisfies
\[
\lambda \geq \hat{d}_1.
\]
\end{lemma}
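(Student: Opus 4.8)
The plan is to turn each grounded eigenvalue into the root of a scalar, monotone equation and then compare the roots for the two candidate free sets. First I would rewrite the characteristic equation of Lemma~\ref{lemma:spectral}: since $\frac{\lambda d_n}{d_n-\lambda}=\frac{d_n^2}{d_n-\lambda}-d_n$ and $\sum_{n\in F}d_n+\sum_{n\in P}d_n=K$, it is equivalent to $\Phi_F(\lambda):=\sum_{n\in F}\frac{d_n^2}{d_n-\lambda}=K$. For nonempty $P$ the map $\Phi_F$ is continuous and strictly increasing on $(0,\underline d_F)$, running from $\Phi_F(0^+)=\sum_{n\in F}d_n<K$ to $+\infty$ as $\lambda\to\underline d_F^-$, so the eigenvalue $\lambda(F)$ is its unique root there; comparing strategies (a) and (b) therefore reduces to comparing where $\Phi_{F_a}$ and $\Phi_{F_b}$ meet the level $K$.

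Next I would set $F_a=F_0\cup B$ and $F_b=F_0\cup A$, where $B$ is the block of all $\gamma_1$ nodes of degree $\hat d_1$, $A$ is the block of $\gamma_1$ nodes in $P_a\setminus P_b$ (degree ranks $c-\gamma_1+1$ through $c$), and $F_0=F_a\cap F_b$. The facts to record are: every degree in $A$ is $\ge\hat d_2>\hat d_1$; every degree in $F_0$ is at most every degree in $A$; $\underline d_{F_a}=\hat d_1$, whereas $\underline d_{F_b}\ge\hat d_2>\hat d_1$. In particular $\lambda_a:=\lambda(F_a)\in(0,\hat d_1)$ for every instance. This already yields the ``if'' direction: if $\lambda_b:=\lambda(F_b)\ge\hat d_1$ then $\lambda_a<\hat d_1\le\lambda_b$, so (b) is strictly better.

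For the ``only if'' direction I would compare the two $\Phi$'s through their difference. On $(0,\hat d_1)$ put $g(\lambda):=\Phi_{F_b}(\lambda)-\Phi_{F_a}(\lambda)=\sum_{n\in A}\frac{d_n^2}{d_n-\lambda}-\frac{\gamma_1\hat d_1^2}{\hat d_1-\lambda}$. Since $x\mapsto(x/(x-\lambda))^2$ is decreasing for $x>\lambda$ and $A$ has exactly $\gamma_1$ members, all of degree $>\hat d_1$, one gets $g'<0$ on $(0,\hat d_1)$; together with $g(0^+)=\sum_{n\in A}d_n-\gamma_1\hat d_1>0$ and $g(\hat d_1^-)=-\infty$ this gives a unique zero $\lambda^*\in(0,\hat d_1)$. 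Writing $\Phi_{F_b}=\Phi_{F_a}+g$ and using $\Phi_{F_a}(\lambda_a)=K$: if $\lambda_a\le\lambda^*$ then $g(\lambda_a)\ge0$, hence $\Phi_{F_b}(\lambda_a)\ge K=\Phi_{F_b}(\lambda_b)$, and strict monotonicity of $\Phi_{F_b}$ on $(0,\underline d_{F_b})$ (which contains $\lambda_a$, since $\lambda_a<\hat d_1<\underline d_{F_b}$, and $\lambda_b$) forces $\lambda_b\le\lambda_a<\hat d_1$, so (b) is not better. Thus ``(b) strictly better'' is exactly the case $\lambda_a>\lambda^*$, and the whole statement reduces to showing $\lambda_a>\lambda^*\ \Rightarrow\ \lambda_b\ge\hat d_1$, equivalently $\Phi_{F_b}(\hat d_1)\le K$.

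That implication is the main obstacle. When $\lambda_a>\lambda^*$ one has $g(\lambda_a)<0$, hence $\Phi_{F_b}(\lambda_a)<K$ and $\lambda^*<\lambda_a<\lambda_b$; to reach $\lambda_b\ge\hat d_1$ I would need to bound $\Phi_{F_b}(\hat d_1)=\Phi_{F_0}(\hat d_1)+\sum_{n\in A}\frac{d_n^2}{d_n-\hat d_1}$ from above by $K$. Substituting $\Phi_{F_0}(\lambda_a)=K-\frac{\gamma_1\hat d_1^2}{\hat d_1-\lambda_a}$ from strategy (a)'s equation, this becomes an inequality relating the increments $\Phi_{F_0}(\hat d_1)-\Phi_{F_0}(\lambda_a)$ and $\sum_{n\in A}(\frac{d_n^2}{d_n-\hat d_1}-\frac{d_n^2}{d_n-\lambda_a})$ to $\frac{\gamma_1\hat d_1^2}{\hat d_1-\lambda_a}-\sum_{n\in A}\frac{d_n^2}{d_n-\lambda_a}$, and the only structural leverage available is the degree domination of $F_0$ by $A$. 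Pushing this estimate through cleanly — especially in the boundary sub-case $c=\gamma_1$, where $A$ is the set of highest-degree nodes and $F_0$ the middle ones — is the delicate step; I would expect it to require a convexity/rearrangement argument for $x\mapsto x^2/(x-\lambda)$ (whose minimizer over $x>\lambda$ is $x=2\lambda$), since the naive term-by-term comparison fails whenever $\hat d_1<2\lambda_a$.
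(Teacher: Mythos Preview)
Your treatment of the ``if'' direction is clean and correct: $\lambda_a<\hat d_1$ always holds by Lemma~\ref{lemma:spectral}, so $\lambda_b\ge\hat d_1$ immediately gives $\lambda_b>\lambda_a$. Your reduction of the comparison to the scalar equation $\Phi_F(\lambda)=K$, the crossover analysis via $g(\lambda)=\Phi_{F_b}-\Phi_{F_a}$, and the conclusion that ``(b) strictly better $\Leftrightarrow\lambda_a>\lambda^*$'' are all sound.

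The obstacle you isolate in the ``only if'' direction is not merely delicate --- the implication $\lambda_b>\lambda_a\Rightarrow\lambda_b\ge\hat d_1$ is \emph{false} as stated, so no convexity or rearrangement argument will close it. A concrete counterexample: take the degree sequence $(1,1,2,2,2,2)$, so $K=10$, $\hat d_1=1$, $\gamma_1=2$, $\gamma_2=4$, and choose $c=3$. Strategy~(a) pins three degree-$2$ nodes, leaving $F_a$ with degree vector $(1,1,2)$; solving $5\lambda^2-12\lambda+6=0$ gives $\lambda_a=(6-\sqrt6)/5\approx0.710$. Strategy~(b) pins both degree-$1$ nodes and one degree-$2$ node, leaving $F_b$ with degree vector $(2,2,2)$; then $\mathcal L_{F_b}=2I-\tfrac{4}{10}J_3$ has smallest eigenvalue $\lambda_b=2-\tfrac{12}{10}=0.8$. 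Hence $\lambda_b>\lambda_a$ but $\lambda_b<\hat d_1=1$. Equivalently, your target inequality $\Phi_{F_b}(\hat d_1)\le K$ fails outright here: $\Phi_{F_b}(1)=3\cdot\tfrac{4}{2-1}=12>10=K$.

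So the biconditional in the lemma cannot hold in general; $\lambda_b\ge\hat d_1$ is sufficient but not necessary for (b) to dominate (a). Your analysis actually delivers the sharp characterization --- dominance of (b) is equivalent to $\lambda_a$ exceeding the data-dependent crossover $\lambda^*$ --- and that is the correct replacement. The downstream use in the paper (the persistence claim after Lemma~\ref{lemma:switch} and the monotonicity step in the proof of Theorem~\ref{theorem:algorithm}) invokes the ``only if'' direction, so that argument inherits the gap; whether the monotonicity of $k_c^*$ survives would require a separate repair. In any case, you should not expect to complete the ``only if'' as written.
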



Lemma~\ref{lemma:switch} reveals an important phenomenon: as the number of pinning nodes $c$ increases from $c < \gamma_1$ to $\gamma_1 \leq c \leq \gamma_1 + \gamma_2$, a transition may occur where the optimal strategy switches from (a) to (b). At this critical point, a ``chaotic'' dependence on $c$ may be observed, wherein a small change in $c$ can lead to a dramatic shift in the optimal pinning set.

Furthermore, Lemma~\ref{lemma:switch} implies that for any $\gamma_1 \leq c_1 < c_2 < \gamma_1 + \gamma_2$, if strategy (b) is optimal for $c_1$ pinning nodes, then it remains optimal for $c_2$. Indeed, using strategy (a) at $c_2$ can only yield a solution with eigenvalue $\lambda < \hat{d}_1$, while persisting with strategy (b) at $c_2$ results in an even larger eigenvalue $\lambda$ compared to the case with $c_1$ pins, as will be demonstrated through a simple observation.

\begin{proposition}\label{prop:1}
    When there are $c_1$ pinning nodes and suppose the pinning strategy is $(P_1,F_1)$, which results in an eigenvalue $\lambda_1$. Then for $c_2 =c_1+1$, construct a strategy by picking any nodes $f$ from $F_1$ let $P_2=P_1\cup \{f\}$ and $F_2 = F_1 \setminus \{f\}$. In addition, we assume the smallest degree in $F_1$ do not change by removing $f$. Let the resulting eigenvalue of strategy $(P_2,F_2)$ be $\lambda_2$, then $\lambda_2 > \lambda_1$.
\end{proposition}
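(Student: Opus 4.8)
The plan is to reduce everything to the characteristic equation of Lemma~\ref{lemma:spectral} and to exploit its monotonicity. For a free set $F$ and pinned set $P$, introduce the auxiliary function
\begin{equation}
\phi_F(\lambda) \;=\; \sum_{n\in F}\frac{\lambda\,d_n}{d_n-\lambda},
\end{equation}
defined for $\lambda\in(0,\underline{d}_F)$. Each summand is strictly increasing there, so $\phi_F$ is strictly increasing, with $\phi_F(0^+)=0$ and $\phi_F(\lambda)\to+\infty$ as $\lambda\uparrow\underline{d}_F$. By Lemma~\ref{lemma:spectral}, $\lambda_1$ is the unique point in $(0,\underline{d}_{F_1})$ with $\phi_{F_1}(\lambda_1)=\sum_{n\in P_1}d_n$, and similarly $\lambda_2\in(0,\underline{d}_{F_2})$ satisfies $\phi_{F_2}(\lambda_2)=\sum_{n\in P_2}d_n$.

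First I would record the bookkeeping of the swap. Since $F_2=F_1\setminus\{f\}$ and $P_2=P_1\cup\{f\}$, we have $\sum_{n\in P_2}d_n=\sum_{n\in P_1}d_n+d_f$ and $\phi_{F_2}(\lambda)=\phi_{F_1}(\lambda)-\lambda d_f/(d_f-\lambda)$ wherever both sides are defined. The hypothesis that removing $f$ does not change the minimum degree of $F_1$ gives $\underline{d}_{F_2}=\underline{d}_{F_1}$; hence $\phi_{F_1}$ and $\phi_{F_2}$ share the common domain $(0,\underline{d}_{F_1})$, so in particular $\lambda_1$ lies in the domain of $\phi_{F_2}$. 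Moreover $f\in F_1$ forces $d_f\ge\underline{d}_{F_1}>\lambda_1$, so the subtracted term $\lambda_1 d_f/(d_f-\lambda_1)$ is finite and strictly positive.

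The conclusion then follows by evaluating $\phi_{F_2}$ at $\lambda_1$ and using both effects of the swap:
\begin{equation}
\phi_{F_2}(\lambda_1)=\sum_{n\in P_1}d_n-\frac{\lambda_1 d_f}{d_f-\lambda_1}<\sum_{n\in P_1}d_n<\sum_{n\in P_2}d_n=\phi_{F_2}(\lambda_2).
\end{equation}
Since $\phi_{F_2}$ is strictly increasing on $(0,\underline{d}_{F_2})$ and both $\lambda_1,\lambda_2$ lie in this interval, $\phi_{F_2}(\lambda_1)<\phi_{F_2}(\lambda_2)$ forces $\lambda_1<\lambda_2$, which is the claim.

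There is no serious obstacle here; the only point requiring care is the role of the hypothesis that the smallest degree of $F_1$ is unchanged. It is exactly what guarantees $\underline{d}_{F_2}=\underline{d}_{F_1}$, so that $\lambda_1$ remains a valid argument for $\phi_{F_2}$ and the monotonicity comparison can be carried out on a single fixed interval; without it $\lambda_1$ could coincide with $\underline{d}_{F_2}$, where $\phi_{F_2}$ blows up, invalidating the evaluation step. Everything else is the elementary observation that enlarging the pinned-degree sum while deleting one strictly positive term from $\phi_F$ both push the unique root of the characteristic equation to the right.
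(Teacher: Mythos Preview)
Your argument is correct and is essentially the paper's approach: reduce to the characteristic equation of Lemma~\ref{lemma:spectral} and compare the two roots via the strict monotonicity of $\phi_F$ on $(0,\underline{d}_F)$. One minor point: your closing remark slightly overstates the role of the hypothesis, since deleting a node from $F_1$ can never lower $\underline{d}_F$, so $\lambda_1<\underline{d}_{F_1}\le\underline{d}_{F_2}$ holds automatically and your evaluation of $\phi_{F_2}$ at $\lambda_1$ is valid even without it; the assumption is stated for the downstream applications rather than being essential to this monotonicity step.
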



\subsubsection{General Pinning Set}\label{sec:general}

Corollary~\ref{coro:2} suggests if we decide to pin a node with a small degree $\hat{d}_1$, the optimal strategy can only be pin all nodes with the same degree collectively.  A similar conclusion can be drawn for general $c$.

\begin{corollary} \label{coro:2_general}
Let $c < N$ be the number of pinned nodes, and consider a pinning strategy $(P, F)$. Suppose there exist two nodes $p \in P$ and $f \in F$ such that $d_p < d_f$. Let $k$ be the index such that $\hat{d}_k = d_p$, and define $c_k = \vert \{ n \in P : d_n = \hat{d}_k \} \vert$, i.e., the number of pinned nodes with degree $\hat{d}_k$. If $c_k < \gamma_k$ (i.e., not all nodes with degree $\hat{d}_k$ are included in $P$), then the strategy cannot be globally optimal.
\end{corollary}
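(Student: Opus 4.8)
The plan is to reduce Corollary~\ref{coro:2_general} to the already-proven swap result in Theorem~\ref{thm:swap}, and to handle the one boundary case not covered by that theorem (namely when $f$ is the unique free node attaining the minimum free degree) separately. So the first step is to set up the dichotomy: given a strategy $(P,F)$ with $p\in P$, $f\in F$, $d_p<d_f$, and $c_k<\gamma_k$ where $\hat d_k=d_p$, I distinguish whether or not removing $f$ from $F$ changes $\underline d_F=\min_{n\in F}d_n$. In the generic branch, $f$ is not the unique minimizer, so condition~(ii) of Theorem~\ref{thm:swap} holds; condition~(i) holds because $d_f>d_p$ and $d_p=\hat d_k\geq \hat d_1$, and we additionally need $d_p\ge\underline d_F$. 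But this is automatic: since $c_k<\gamma_k$, at least one node of degree $\hat d_k=d_p$ remains free, so $\underline d_F\leq d_p$. Hence Theorem~\ref{thm:swap} applies and swapping $f$ and $p$ strictly increases $\lambda$, so $(P,F)$ was not optimal.

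The remaining branch is the main obstacle: $f$ \emph{is} the unique free node of minimum degree $\underline d_F$. Then $d_f=\underline d_F\le d_p$, which contradicts $d_p<d_f$ — so in fact this branch is vacuous as stated. Wait: re-examining, $d_p<d_f$ forces $d_p<\underline d_F$ only if $f$ attains $\underline d_F$; but if $f$ is the unique minimizer then $d_f=\underline d_F$ and we would need $d_p<\underline d_F$, i.e.\ $p$ has degree strictly below every free node's degree. Combined with $d_p=\hat d_k$ and $c_k<\gamma_k$ (some degree-$\hat d_k$ node is free), we get $\underline d_F\le d_p=d_f=\underline d_F$, contradicting uniqueness of $f$ as minimizer. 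Therefore the troublesome branch cannot occur, and the generic branch already disposes of every case. I should double-check this carefully in the write-up, because the subtle point is precisely that $c_k<\gamma_k$ is exactly the hypothesis that rules out the degenerate configuration — this is why the corollary needs that assumption and not merely $d_p<d_f$.

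Thus the skeleton of the proof is: (1) observe $c_k<\gamma_k$ implies a degree-$\hat d_k$ node remains in $F$, hence $\underline d_F\le \hat d_k=d_p$; (2) conclude $f$ is not the unique minimizer of degree over $F$, since $d_f>d_p\ge\underline d_F$ means there is a free node (any remaining degree-$\hat d_k$ node, or whatever attains $\underline d_F$) with degree $\le d_p<d_f$, so $f\ne$ that node and removing $f$ leaves $\underline d_F$ unchanged; (3) verify the hypotheses of Theorem~\ref{thm:swap} — namely $d_f>d_p\ge\underline d_F$ and the non-uniqueness condition — both now hold; (4) apply Theorem~\ref{thm:swap} to conclude the $f\leftrightarrow p$ swap strictly increases $\lambda$, contradicting global optimality of $(P,F)$. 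I expect step~(2) to require the most care, since one must argue cleanly that the free node witnessing $\underline d_F\le d_p$ is genuinely distinct from $f$; the degree strict inequality $d_f>d_p$ makes this immediate, but the argument should be spelled out. No heavy computation is involved — the entire content is a structural reduction to the previously established swap lemma, with the role of the hypothesis $c_k<\gamma_k$ being to guarantee the applicability of that lemma.
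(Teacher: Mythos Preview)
Your proposal is correct and takes essentially the same approach the paper intends: reduce to the swap result of Theorem~\ref{thm:swap}. The key observation you make---that $c_k<\gamma_k$ forces a free node of degree $\hat d_k=d_p$, hence $\underline d_F\le d_p<d_f$, so $f$ never attains $\underline d_F$ and both hypotheses of Theorem~\ref{thm:swap} are met---is exactly the point; your initial dichotomy is unnecessary (as you discovered), and the clean write-up is just your steps (1)--(4).
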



Corollary~\ref{coro:2_general} states that if a low-degree node (in the sense there exists $d_f>d_p$ for $p\in P$ and $f\in F$), then all nodes with the same degree as $p$ must also be pinned. Moreover, we can show that if all nodes with degree $\hat{d}_k$ are pinned, this can only be optimal if all nodes with degrees strictly less than $\hat{d}_k$ are also included in the pinning set.

\begin{corollary} \label{coro:2_general2}
Let $c < N$ and consider a pinning strategy $(P, F)$. Suppose that all nodes with degree $\hat{d}_k$ are included in $P$ for some $k$. If there exist two nodes $f, f' \in F$ such that $d_{f'} < \hat{d}_k < d_f$, then the strategy cannot be globally optimal.
\end{corollary}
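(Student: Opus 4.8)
The plan is to prove Corollary~\ref{coro:2_general2} by contradiction, leveraging the structural exchange results already established, in particular Theorem~\ref{thm:swap} and Corollary~\ref{coro:2_general}. Suppose $(P,F)$ is globally optimal, that all $\gamma_k$ nodes of degree $\hat d_k$ lie in $P$, and that there exist $f,f'\in F$ with $d_{f'}<\hat d_k<d_f$. The key observation is that since $f'\in F$ has degree strictly below $\hat d_k$, the minimum free degree satisfies $\underline d_F \le d_{f'} < \hat d_k = d_p$ for any node $p\in P$ with degree $\hat d_k$; in particular $d_f > d_p \ge \underline d_F$ is violated only if $d_p < \underline d_F$, which cannot happen, so we actually have the cleaner inequality $d_f > d_p > \underline d_F$. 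Hence condition (i) of Theorem~\ref{thm:swap} holds for the pair $(f,p)$. For condition (ii), note that $f'\in F$ attains a degree strictly smaller than $d_f$ (indeed strictly smaller than $\hat d_k$), so $f$ is certainly not the unique minimizer of the free-node degree; removing $f$ from $F$ leaves $\underline d_F$ unchanged.

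With both hypotheses of Theorem~\ref{thm:swap} verified, swapping $f$ and $p$ — freeing the degree-$\hat d_k$ node $p$ and pinning the higher-degree node $f$ — strictly increases $\lambda_1(\mathcal L_F)$. This contradicts the assumed global optimality of $(P,F)$, completing the proof. I would present this as the main line of argument, since it is short once the hypotheses of Theorem~\ref{thm:swap} are checked.

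The one subtlety I would be careful about is the edge case where freeing $p$ reintroduces a node whose degree is $\hat d_k$ — but that does not matter here, because Theorem~\ref{thm:swap} only requires that the minimum degree of $F$ be preserved \emph{before} the swap (through the non-uniqueness of the minimizer among the original free nodes), and its conclusion already accounts for the new free node $p$ entering $F$ with degree $d_p = \hat d_k > \underline d_F$. So no degeneracy arises. A second point worth a sentence: one must confirm that the swap is well-defined, i.e.\ that $P$ genuinely contains a node of degree $\hat d_k$ to play the role of $p$ — this is exactly the standing hypothesis that all $\gamma_k\ge 1$ nodes of degree $\hat d_k$ are in $P$.

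The main obstacle, such as it is, is purely bookkeeping: making sure the strict inequalities line up so that Theorem~\ref{thm:swap}(i) reads $d_f>d_p\ge\underline d_F$ with no hidden equality case, and that the presence of $f'$ (rather than $f$ itself) is what guarantees Theorem~\ref{thm:swap}(ii). Both follow immediately from the hypothesis $d_{f'}<\hat d_k<d_f$, so I expect the proof to be essentially a two-paragraph reduction to Theorem~\ref{thm:swap} with no heavy computation. If one wanted an alternative route avoiding a direct appeal to Theorem~\ref{thm:swap}, one could instead invoke Corollary~\ref{coro:2_general} applied to the pair $(p,f)$ after observing $d_p<d_f$ and that not all nodes of degree $\hat d_k$ could simultaneously be optimal once a strictly-lower-degree free node exists — but the Theorem~\ref{thm:swap} argument is the cleanest and I would lead with it.
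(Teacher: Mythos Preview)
Your main argument is correct and is essentially the paper's approach: the result is an immediate consequence of Theorem~\ref{thm:swap}, applied to any $p\in P$ with $d_p=\hat d_k$ and the free node $f$ with $d_f>\hat d_k$, using the presence of $f'\in F$ with $d_{f'}<\hat d_k$ to guarantee both $d_p>\underline d_F$ and that $f$ is not the unique minimizer. One small remark: your suggested alternative route via Corollary~\ref{coro:2_general} does not go through as stated, since that corollary requires $c_k<\gamma_k$ whereas here all $\gamma_k$ nodes of degree $\hat d_k$ are already pinned; but this does not affect your primary (and correct) argument.
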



With Corollary~\ref{coro:2_general2}, we can conclude that the optimal pinning strategy must take the following form: the optimal pinning set $P$ contains all low-degree nodes up to index $k_c$; then, by Theorem~\ref{thm:swap}, the remaining nodes must be the highest-degree ones. However, it remains unclear how $k_c$ varies with $n$. To address this, we extend Lemma~\ref{lemma:switch} to the general case of $c$ as follows.

\begin{lemma}\label{lemma:dominance}
Let $c = \vert P \vert$ be the size of the pinning set, and suppose that for some integer $k$, the cumulative multiplicity satisfies $\sum_{i=1}^{k} \gamma_i < c$. Consider the following two pinning strategies:
\begin{itemize}
    \item[(a)] Select all nodes with degrees $\hat{d}_1, \ldots, \hat{d}_{k-1}$ (i.e., $\sum_{i=1}^{k-1} \gamma_i$ nodes), and choose the remaining $c - \sum_{i=1}^{k-1} \gamma_i$ nodes from those with the highest degrees.
    \item[(b)] Select all nodes with degrees $\hat{d}_1, \ldots, \hat{d}_{k}$ (i.e., $\sum_{i=1}^{k} \gamma_i$ nodes), and choose the remaining $c - \sum_{i=1}^{k} \gamma_i$ nodes from those with the highest degrees.
\end{itemize}
Let $\lambda_a$ and $\lambda_b$ denote the smallest eigenvalues of the grounded Laplacian resulting from strategies (a) and (b), respectively. Then strategy (b) dominates strategy (a)—i.e., yields a larger $\lambda$—if and only if
\[
\lambda_b \geq \hat{d}_k > \lambda_a.
\]
\end{lemma}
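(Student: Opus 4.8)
The plan is to prove this the same way one would prove its $k=1$ prototype, Lemma~\ref{lemma:switch}: turn the eigenvalue comparison into a sign test on the characteristic function of Lemma~\ref{lemma:spectral}, and exploit that strategies (a) and (b) differ only by a size-$\gamma_k$ block swap whose effect on that function is a single explicit rational correction term.

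First I would set, for any free set $F$ with complement $P=V\setminus F$,
\[
g_F(\mu)=\sum_{n\in F}\frac{\mu d_n}{d_n-\mu}-\sum_{n\in P}d_n ,
\]
so that by Lemma~\ref{lemma:spectral}, $g_F$ is continuous and strictly increasing on $(0,\underline{d}_F)$ with unique zero $\lambda_F$; since each $g$ increases through its own zero, comparing two strategies reduces to a single sign check — e.g.\ $\lambda_{F'}\le\lambda_F\iff g_{F'}(\lambda_F)\ge0$ whenever $\lambda_F<\underline{d}_{F'}$ — and in particular $\lambda_F\lessgtr\mu_0\iff g_F(\mu_0)\gtrless0$. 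Under strategy (a) the $\gamma_k$ nodes of degree $\hat{d}_k$ (call this block $D_k$) remain free, so $\underline{d}_{F_a}=\hat{d}_k$ and hence $\lambda_a<\hat{d}_k$ automatically, which already disposes of the ``$\hat{d}_k>\lambda_a$'' clause. Under strategy (b) the block $D_k$ is pinned and, in exchange, the $\gamma_k$ lowest-degree members of strategy (a)'s high-degree pin-block (call them $T$, each of degree $\ge\hat{d}_{k+1}>\hat{d}_k$) become free, so $F_b=(F_a\setminus D_k)\cup T$ and $P_b=(P_a\setminus T)\cup D_k$ (with the obvious modification if strategy (a)'s pin-block already dips into degree level $\hat{d}_k$, in which case smaller equal-size sub-blocks are swapped). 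Substituting into the definition of $g$ and using $\tfrac{\mu d}{d-\mu}+d=\tfrac{d^2}{d-\mu}$ gives the identity
\[
g_{F_b}(\mu)=g_{F_a}(\mu)+\sum_{n\in T}\frac{d_n^{2}}{d_n-\mu}-\gamma_k\,\frac{\hat{d}_k^{2}}{\hat{d}_k-\mu}\qquad(0<\mu<\hat{d}_k).
\]

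The ``if'' direction is then immediate: $\lambda_b\ge\hat{d}_k>\lambda_a$ forces $\lambda_b>\lambda_a$, so (b) dominates. For the ``only if'' direction I would argue the contrapositive. Assume $\lambda_b<\hat{d}_k$, so that $\lambda_a,\lambda_b\in(0,\hat{d}_k)$; evaluating the identity at $\mu=\lambda_b$, where $g_{F_b}(\lambda_b)=0$, yields
\[
g_{F_a}(\lambda_b)=\gamma_k\,\frac{\hat{d}_k^{2}}{\hat{d}_k-\lambda_b}-\sum_{n\in T}\frac{d_n^{2}}{d_n-\lambda_b},
\]
and since $g_{F_a}$ is increasing with zero at $\lambda_a$, the desired conclusion $\lambda_b\le\lambda_a$ is exactly $g_{F_a}(\lambda_b)\le0$. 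Thus the whole lemma reduces to the scalar inequality
\[
\gamma_k\,\frac{\hat{d}_k^{2}}{\hat{d}_k-\lambda_b}\;\le\;\sum_{n\in T}\frac{d_n^{2}}{d_n-\lambda_b}\qquad\text{whenever }\lambda_b<\hat{d}_k .
\]

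This inequality is the step I expect to be the main obstacle. It is delicate because $\psi(d):=d^{2}/(d-\lambda_b)$ is \emph{not} monotone on $(\lambda_b,\infty)$: it decreases on $(\lambda_b,2\lambda_b)$ and increases on $(2\lambda_b,\infty)$, so ``$d_n>\hat{d}_k$ for all $n\in T$'' does not by itself give $\psi(d_n)\ge\psi(\hat{d}_k)$. I would split on the location of $\hat{d}_k$ relative to $2\lambda_b$: when $\hat{d}_k\ge2\lambda_b$, both $\hat{d}_k$ and every $d_n$ with $n\in T$ lie in the increasing branch, so the inequality holds term by term; when $\lambda_b<\hat{d}_k<2\lambda_b$ — the regime responsible for the chaotic transition flagged after Lemma~\ref{lemma:switch} — the term-by-term bound can be lost near the minimum of $\psi$, and one must instead feed the characteristic equation $\sum_{n\in F_b}\tfrac{\lambda_b d_n}{d_n-\lambda_b}=\sum_{n\in P_b}d_n$ back in to turn the hypothesis $\lambda_b<\hat{d}_k$ into a quantitative lower bound on $\sum_{n\in T}d_n^{2}/(d_n-\lambda_b)$ in terms of $K=\sum_n d_n$ and the pinned degrees; closing that estimate is the real content, and it may require an additional structural input about the degree sequence. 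A useful reduction along the way: once the first ``critical'' swap (the one that lowers the minimum free degree from $\hat{d}_{k+1}$ to $\hat{d}_k$) has been made, each of the remaining $\gamma_k-1$ swaps of a degree-$\hat{d}_k$ free node for a degree-$\ge\hat{d}_{k+1}$ node leaves the minimum free degree unchanged and is therefore covered by Theorem~\ref{thm:swap}, so it suffices to settle the $\gamma_k=1$ case $\hat{d}_k^{2}/(\hat{d}_k-\lambda_b)\le d_t^{2}/(d_t-\lambda_b)$ for the single relevant $t\in T$.
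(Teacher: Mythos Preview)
Your reduction is correct and well executed: defining $g_F$, noting that $\lambda_a<\hat d_k$ is automatic from Lemma~\ref{lemma:spectral}, deriving the swap identity
\[
g_{F_b}(\mu)-g_{F_a}(\mu)=\sum_{n\in T}\frac{d_n^{2}}{d_n-\mu}-\gamma_k\frac{\hat d_k^{2}}{\hat d_k-\mu},
\]
and collapsing the ``only if'' direction to the scalar inequality $\gamma_k\psi(\hat d_k)\le\sum_{n\in T}\psi(d_n)$ with $\psi(d)=d^{2}/(d-\lambda_b)$, are all exactly right. Your observation that $\psi$ is \emph{not} monotone on $(\lambda_b,\infty)$, and your reduction to $\gamma_k=1$ via Theorem~\ref{thm:swap}, are also sound.

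The obstacle you flag is genuine and, in fact, fatal to the statement as written: the ``only if'' direction is false. Take $N=4$, degree sequence $(10,11,11,100)$, $K=132$, $c=2$, $k=1$. Strategy~(a) pins $\{11,100\}$, giving $F_a=\{10,11\}$; strategy~(b) pins $\{10,100\}$, giving $F_b=\{11,11\}$. The characteristic equation for $F_b$ reads $2\cdot\tfrac{11\lambda}{11-\lambda}=110$, hence $\lambda_b=55/6\approx 9.17$, while the $2\times2$ grounded Laplacian for $F_a$ has smallest eigenvalue $\lambda_a\approx 8.73$. Thus $\lambda_b>\lambda_a$ (strategy~(b) dominates), yet $\lambda_b<\hat d_1=10$. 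Equivalently, with the single $t\in T$ having $d_t=11$, one computes $\psi(\hat d_k)=100/(10-55/6)=120$ but $\psi(d_t)=121/(11-55/6)=66$, so your target inequality fails; this is precisely the non-monotone regime $\lambda_b<\hat d_k<2\lambda_b$ you isolated. The same four-node example also defeats the $k=1$ prototype, Lemma~\ref{lemma:switch}.

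The paper relegates its proof to supplementary material, so a direct comparison is not possible here, but no argument can establish a false biconditional. Note that in the counterexample strategy~(b) is in fact globally optimal among all $\binom{4}{2}$ pinning sets, so the specific use of Lemma~\ref{lemma:dominance} in the proof of Theorem~\ref{theorem:algorithm} (deducing $\lambda_c\ge\hat d_{k_c^*}$ from global optimality of~(b)) also breaks down on this instance. You should not fault your proposal for failing to close the gap; the correct response is to flag the lemma as needing additional hypotheses or reformulation.
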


With all the preparations, we are ready to complete the proof of Theorem~\ref{theorem:algorithm}.
\begin{proof}[Proof of Theorem~\ref{theorem:algorithm}]
When $c < \gamma_1$, the optimality of procedure $\mathcal{P}$ in Theorem~\ref{theorem:algorithm} is established in Corollary~\ref{coro:1}. The special case where $\gamma_1 = 1$ and $c = 1$ is included in Corollary~\ref{coro:3}, completing the proof of (\romannumeral2) in Theorem~\ref{theorem:algorithm}.
From Corollary~\ref{coro:2_general2} and the subsequent discussions, we can infer that the optimal strategy for a fixed $c$ must follow the procedure $\mathcal{P}$ in Theorem~\ref{theorem:algorithm}, thereby confirming (\romannumeral1).

Next, we prove that $k^*_c$ is non-decreasing with respect to $c$. Consider the case with $c$ pinning nodes. Let $k$ in Lemma~\ref{lemma:dominance} be set to $k_c^*$. Since strategy (b) is globally optimal, it dominates strategy (a), and thus $\lambda_c \geq \hat{d}_{k_c^*}$. Now, consider the case with $c+1$ pinning nodes. Suppose $P_{c+1}$ excludes any node with degree equal to or below $\hat{d}_{k_c^*}$. By Lemma~\ref{lemma:spectral}, the resulting eigenvalue satisfies $\lambda_{c+1} < \hat{d}_{k_c^*} \leq \lambda_c$. However, by Proposition~\ref{prop:1}, we can always find a solution where $\lambda_{c+1} > \lambda_c$ by insisting on pinning all nodes with degree equal to or below $\hat{d}_{k_c^*}$, as in the case for $c$ pinning nodes. Therefore, the optimal strategy for $c+1$ pinning nodes must include all such nodes.

Since $k^*_c$ is non-decreasing in $c$, we can construct the solution $P_{c+1}^*$ from $P_c^*$. If $c+1 < \alpha(k^*_c+1)$, then the smallest degree in the free set $F$ cannot be increased, so $k^*_{c+1} = k^*_c$, confirming part (a) in (\romannumeral3). Otherwise, we need to examine all possible values of $k_{c+1}$ and select the optimal one, which corresponds to the procedure described in part (b) of (\romannumeral3).
\end{proof}

\subsection{Algorithm and Complexity Analysis}\label{sec:algorithm}

We first introduce the baseline method, Algorithm 1(denoted as $\mathcal{A}_1$), which implements a \textbf{low-degree-prioritized} selection rule. 
To construct a pinning set of size $c$, $\mathcal{A}_1$ identifies the largest index $k$ such that 
\[
\sum_{i=1}^{k} \gamma_i \leq c \quad \text{and} \quad \sum_{i=1}^{k+1} \gamma_i > c.
\] 
It then selects all nodes with degrees less than or equal to $\hat{d}_k$ and fills the remaining budget by choosing the highest-degree nodes. 
This procedure requires only two linear scans through the sorted degree sequence, resulting in an overall near-linear complexity of $\mathcal{O}(N+M)$.

While $\mathcal{A}_1$ is computationally efficient, it follows a fixed degree-prioritized rule that may not fully capture the trade-off between low- and high-degree node selection identified in Section~\ref{sec:theory}. 
As discussed there, the optimal pinning set under the annealed approximation arises from a balance: selecting low-degree nodes raises the upper bound of the smallest grounded Laplacian eigenvalue $\lambda$, whereas selecting high-degree nodes maximizes $\lambda$ once the bound is set.

To overcome this limitation, we propose Algorithm 2 (denoted as $\mathcal{A}_2$), whose pseudocode is summarized in the same section. 
$\mathcal{A}_2$ employs an iterative update scheme to determine the optimal threshold index $k^*_c$ for each budget level $c \in [1, c_{\max}]$, as prescribed in Theorem~\ref{theorem:algorithm}, and evaluates candidate solutions by computing the smallest nonzero eigenvalue $\lambda_1$ of the grounded Laplacian. 
Considering both preprocessing and eigenvalue evaluation, the overall runtime of $\mathcal{A}_2$ is approximately $\tilde{\mathcal{O}}(c_{\max} N)$, maintaining near-linear scalability for sparse networks. For complete pseudocode listings and the detailed time-complexity analysis of both $\mathcal{A}_1$ and $\mathcal{A}_2$, see the Supplementary Material, Section ``Pseudocode and Complexity Analysis of Algorithms''

Comparing $\mathcal{A}_1$ and $\mathcal{A}_2$ enables us to examine how the transition between low-degree and high-degree regimes influences the formation of the optimal pinning configuration.

\section{Numerical Experiments}\label{sec:exp}
\subsection{Validation on Small Networks via Exhaustive Enumeration}\label{sec:validation}

\begin{figure*}[htbp]
    \centering
    \begin{subfigure}{0.4\textwidth}
        \centering
        \includegraphics[width=\textwidth, height=0.73\textwidth]{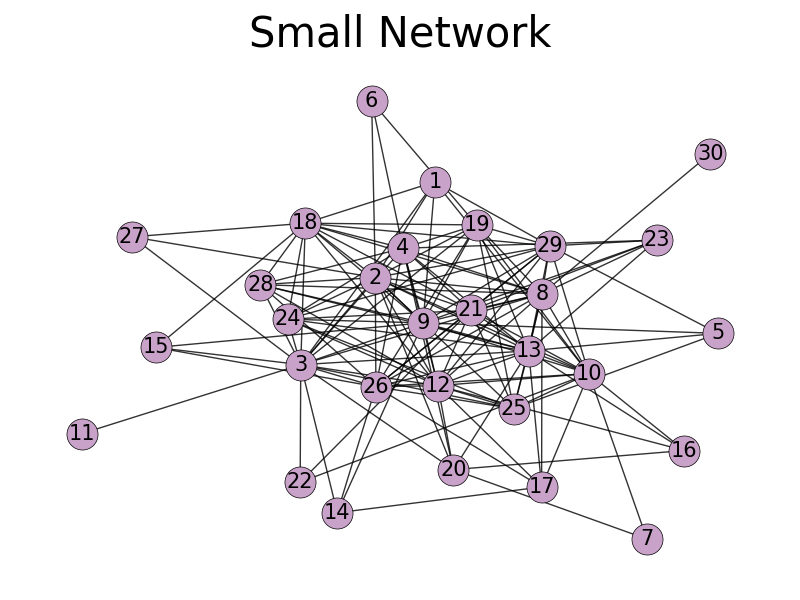}
        \caption{}
    \end{subfigure}
    \begin{subfigure}{0.4\textwidth}
        \centering
        \includegraphics[width=\textwidth, height=0.73\textwidth]{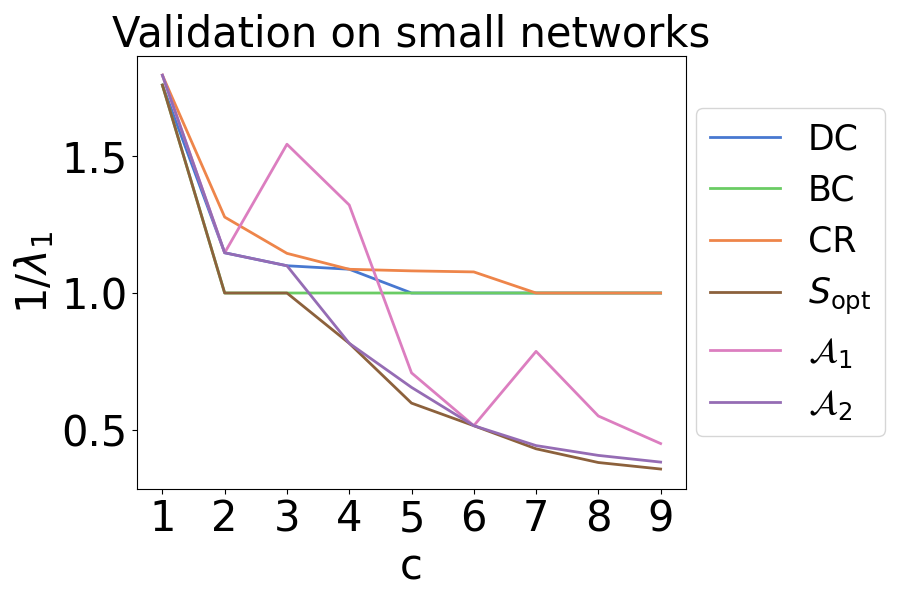}
        \caption{}
    \end{subfigure}
    \caption{Validation of analytical algorithms via exhaustive enumeration on a small configuration model. The global optimum ($S_{\mathrm{opt}}$) is obtained exactly for each cardinality $c \in \{1, \dots, 9\}$. Both $\mathcal{A}_1$ and $\mathcal{A}_2$ accurately recover the optimum across nearly all values of $c$, whereas all three centrality-based heuristics (DC/BC/CR) fail to consistently match optimal performance.}    
    \label{fig:toy_model}
\end{figure*}

\begin{table*}[htbp]
    \caption{Chaotic dependence of the optimal pinning sets for $S_{\mathrm{opt}}$ and $\mathcal{A}_2$ on the budget $c$. Bold entries denote nodes appearing at size $c$ but not in the corresponding set of size $c-1$. The Hamming distance $d_{\mathrm{hm}}$ quantifies the change in composition, and \(1/\lambda_1\) is the corresponding inverse synchronizability index.}
    \centering
    \resizebox{0.9\textwidth}{!}{
    \begin{tabular}{c|ccc|ccc}
        \toprule
        \multirow{2}{*}{\(\mathbf{c}\)} 
            & \multicolumn{3}{c|}{$S_{\mathrm{opt}}$} 
            & \multicolumn{3}{c}{$\mathcal{A}_2$} \\
        \cmidrule(lr){2-4} \cmidrule(lr){5-7}
            & \(\mathbf{S}\) & \(\mathbf{d_{hm}}\) & \(\mathbf{1/\lambda_1}\) 
            & \(\mathbf{S}\) & \(\mathbf{d_{hm}}\) & \(\mathbf{1/\lambda_1}\) \\
        \midrule
        1 & ($||$ 3) & $-$ & 1.7592 & ($|$ 13)& $-$ & 1.7953 \\
        2 & ($||$ \textbf{8}, 3) & 1 & 1.0000 & ($|$ \textbf{3}, 13) & 1 & 1.1469 \\
        3 & (\textbf{30} $||$ \textbf{12}, 3) & 3 & 1.0000 & ($|$ \textbf{2}, 3, 13) & 1 & 1.0995 \\
        4 & (30, \textbf{11} $||$ \textbf{2}, \textbf{13}) & 5 & 0.8163 & (\textbf{11}, \textbf{30} $|$ 3, 13) & 3 & 0.8164 \\
        5 & (11, 30 $||$ \textbf{10}, \textbf{3}, 13) & 3 & 0.5980 & (11, 30 $|$ \textbf{2}, 3, 13) & 1 & 0.6555 \\
        6 & (11, 30, \textbf{7} $||$ \textbf{2}, 3, 13) & 3 & 0.5156 & (11, 30, \textbf{7} $|$ 2, 3, 13) & 1 & 0.5156 \\
        7 & (11, 30, 7 $||$ \textbf{9}, 2, 3, 13) & 1 & 0.4312 & (11, 30, 7 $|$ \textbf{12}, 2, 3, 13) & 1 & 0.4429 \\
        8 & (11, 30, 7 $||$ \textbf{4}, \textbf{10}, 9, 2, 3) & 3 & 0.3811 & (11, 30, 7 $|$ \textbf{8}, 12, 2, 3, 13) & 1 & 0.4071 \\
        9 & (11, 30, 7 $||$ \textbf{18}, 4,  10, 2, 3, \textbf{13}) & 3 & 0.3573 & (11, 30, 7 $|$ \textbf{9}, 8, 12, 2, 3, 13) & 1 & 0.3826 \\
        \bottomrule
    \end{tabular}}
    \label{tab:optimal_set_of_toy_model}
\end{table*}

Since Theorem~\ref{theorem:algorithm} is derived under the annealed approximation, it is crucial to assess its validity on small networks where the globally optimal pinning set can be computed exactly. To this end, we perform exhaustive enumeration on configuration model networks with $N = 30$ nodes and a power-law degree distribution exhibiting low-degree saturation (see Section~\ref{section:power_law_features} for details). For this toy network, we determine the exact optimal pinning set for each target cardinality $c \in \{1,2,\dots,0.3N\}$ by evaluating all possible node combinations.  

We compare the performance of our proposed algorithms $\mathcal{A}_1$ and $\mathcal{A}_2$ against three representative centrality baselines: Degree Centrality (DC), Betweenness Centrality (BC), and Cycle Ratio (CR)~\cite{fan2021cycles}. Each baseline forms the pinning set by selecting the top-$c$ nodes ranked by its respective centrality measure. The synchronizability is quantified using $1/\lambda_1$~\cite{fan2021cycles}, which serves as an inverse synchronizability index, with smaller values indicating stronger post-pinning synchrony.  

Figure~\ref{fig:toy_model}(a) illustrates the toy network with $N=30$ nodes, while Fig.~\ref{fig:toy_model}(b) reports the validation results. We observe that $\mathcal{A}_1$ and $\mathcal{A}_2$ consistently recover the exact global optimum across most values of $c$, while substantially outperforming all three centrality-based heuristics. Notably, in this small-scale setting, $\mathcal{A}_1$ and $\mathcal{A}_2$ yield nearly identical performance to the global optimum, whereas DC, BC, and CR fail to consistently identify the true optimal sets. These results demonstrate the accuracy and robustness of our theoretical framework and provide a strong empirical basis for applying it to larger networks under the annealed approximation.  

To further examine the sensitivity of pinning sets to budget changes, Table~\ref{tab:optimal_set_of_toy_model} reports both the exact global optimum $S_{\mathrm{opt}}$ and the algorithmic solution $\mathcal{A}_2$. For each $c$, we list the selected set, the Hamming distance $d_{\mathrm{hm}}$ from the previous set, and the corresponding \(1/\lambda_1\).  The Hamming distance is defined as
\[
\begin{aligned}
d_{\mathrm{hm}}(c) = | \big(S_{\mathrm{opt}}(c) \setminus S_{\mathrm{opt}}(c-1)\big) \\
                     \cup \big(S_{\mathrm{opt}}(c-1) \setminus S_{\mathrm{opt}}(c)\big) |,
\end{aligned}
\]
which counts the number of nodes replaced or added when $c$ increases by one. To highlight structural differences, we use the symbols $||$ and $|$ to partition nodes into low- and high-degree groups: in $S_{\mathrm{opt}}$, the split $||$ is applied manually, while in $\mathcal{A}_2$, the split $|$ emerges naturally from the algorithm. This representation reveals that both methods capture the same essential principle in Lemma~\ref{lemma:switch}: low-degree nodes determine the synchronizability boundary, while high-degree nodes complement the set.  

From the table we find that both $S_{\mathrm{opt}}$ and $\mathcal{A}_2$ exhibit frequent cases of $d_{\mathrm{hm}}>1$, confirming the ``chaotic dependence'' of pinning sets on budget $c$. At the same time, the selection of low-degree nodes is highly consistent between the two methods, explaining why their synchronizability curves in Fig.~\ref{fig:toy_model}(b) are nearly identical. In contrast, the heuristic baselines overlook this low-degree constraint and therefore achieve inferior performance. These findings not only validate our theory but also clarify the mechanism by which the proposed algorithm aligns with the global optimum.

\subsection{Evaluating Optimal Pinning Strategies across Networks with Varying Degree-Distribution Features}\label{section:power_law_features}

\begin{figure}[htbp]
  \centering
  \includegraphics[width=0.8\columnwidth]{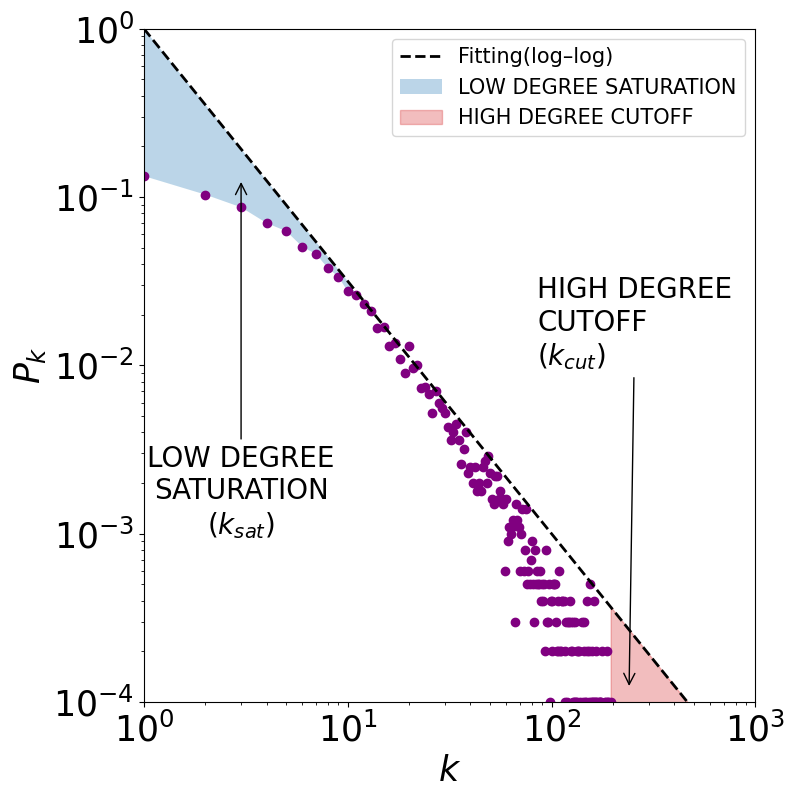}
  \caption{Representative degree distributions illustrating two common deviations from an ideal power-law: low-degree saturation and high-degree cutoff.}
  \label{fig:1}
\end{figure}

\begin{figure*}[htbp]
    \centering
    \begin{subfigure}{0.3\textwidth}
        \centering
        \includegraphics[width=\textwidth, height=0.73\textwidth]{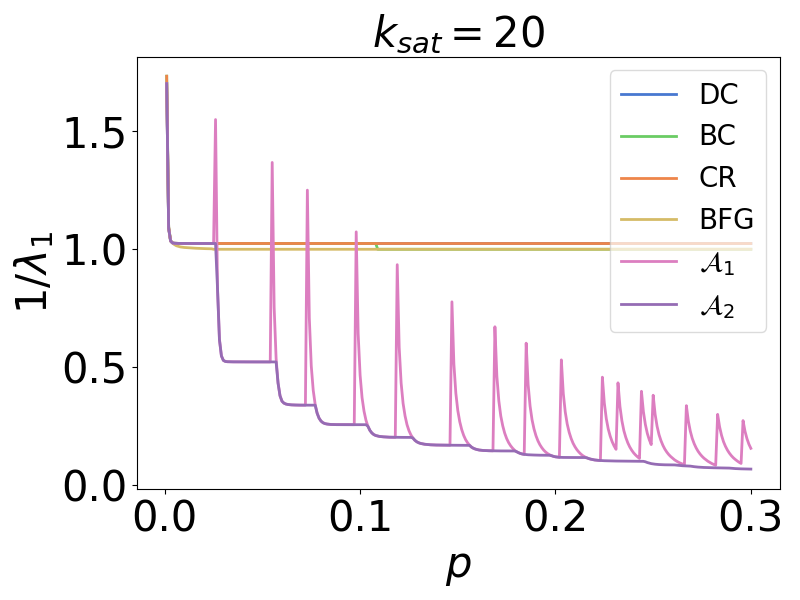}
        \caption{}
    \end{subfigure}
    \begin{subfigure}{0.3\textwidth}
        \centering
        \includegraphics[width=\textwidth, height=0.73\textwidth]{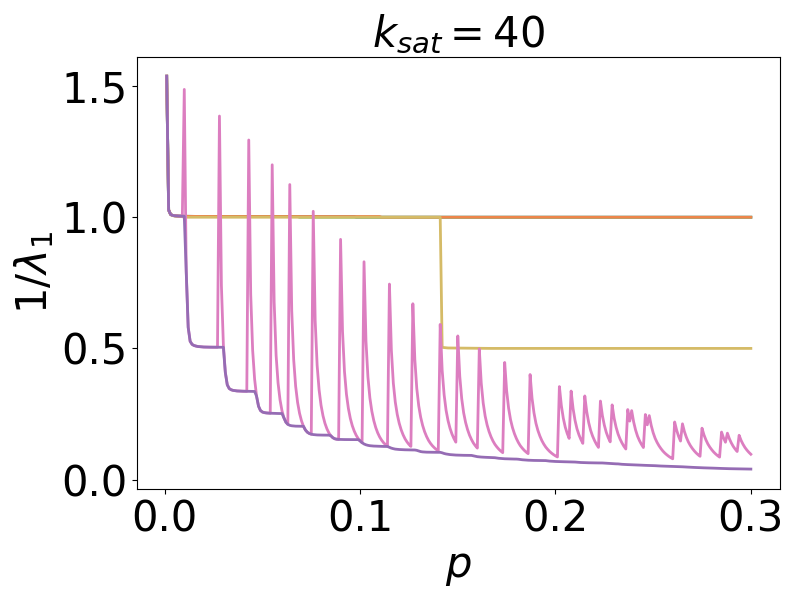}
        \caption{}
    \end{subfigure}
    \begin{subfigure}{0.3\textwidth}
        \centering
        \includegraphics[width=\textwidth, height=0.73\textwidth]{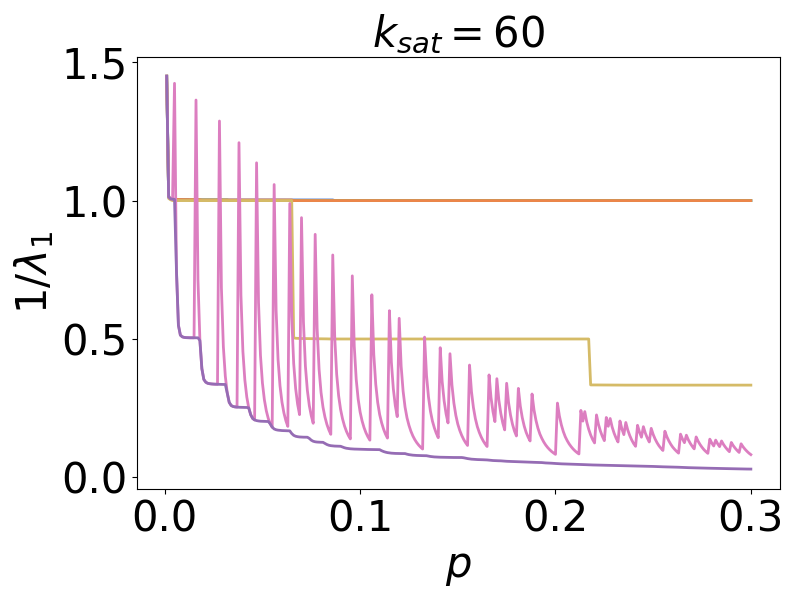}
        \caption{}
    \end{subfigure}
    \begin{subfigure}{0.3\textwidth}
        \centering
        \includegraphics[width=\textwidth, height=0.73\textwidth]{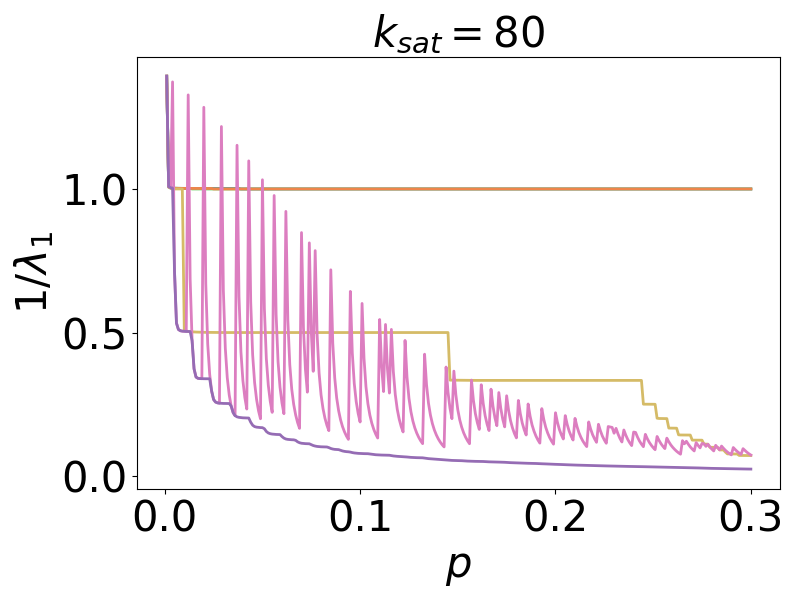}
        \caption{}
    \end{subfigure}
    \begin{subfigure}{0.3\textwidth}
        \centering
        \includegraphics[width=\textwidth, height=0.73\textwidth]{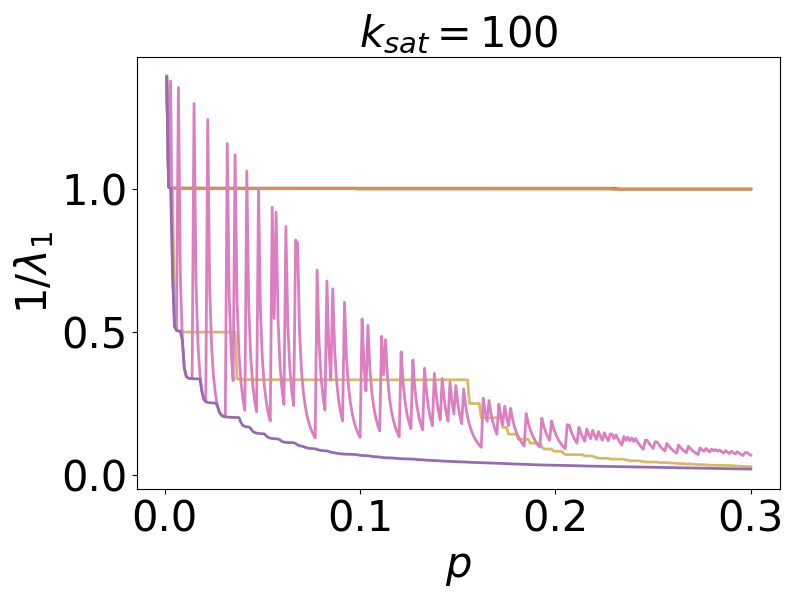}
        \caption{}
    \end{subfigure}
    \begin{subfigure}{0.3\textwidth}
        \centering
        \includegraphics[width=\textwidth, height=0.73\textwidth]{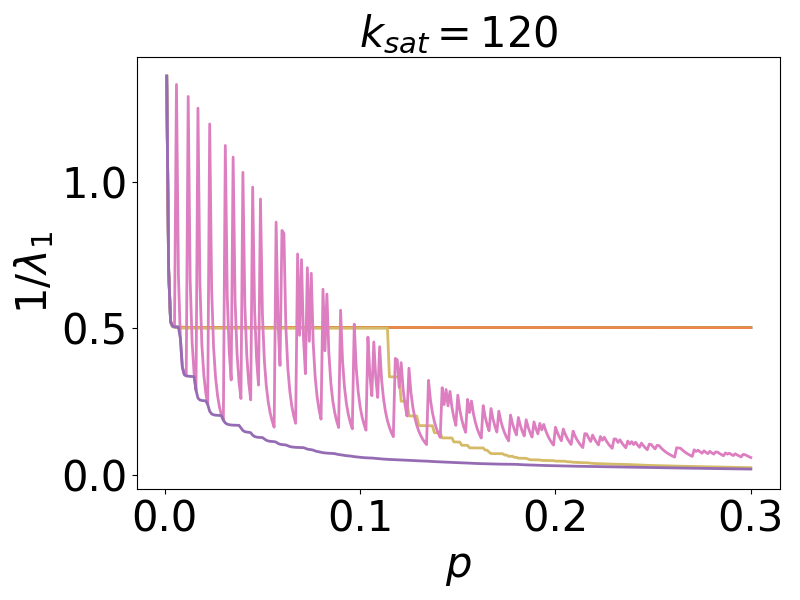}
        \caption{}
    \end{subfigure}
    \caption{Effect of low-degree saturation on pinning performance for configuration networks with \(k_{\text{sat}}=20,40,60,80,100,120\). The horizontal axis shows the pinned-node fraction \(p\) (0–0.3). Curves show \(1/\lambda_1(L_{N(1-p)})\) for various strategies (BC: Betweenness Centrality, CC: Coreness Centrality, DC: Degree Centrality, CR: Cycle Ratio, BFG: brute-force greedy strategy, \(\mathcal{A}_1\), \(\mathcal{A}_2\)); smaller values indicate better synchronizability.}    
    \label{fig:results1}
\end{figure*}

\begin{table*}[htbp]
    \caption{Pinning Efficiency \( \omega \) and End-Point Pinning Effectiveness \( \delta \) under varying low-degree saturation \( k_{\text{sat}} \). Best results are in bold.}
    \centering
    \resizebox{\textwidth}{!}{
    \large 
    \renewcommand{\arraystretch}{1.2}
    \begin{tabular}{c|cc|cc|cc|cc|cc|cc|c|c}
        \toprule
        \multirow{2}{*}{$k_{sat}$} & \multicolumn{2}{c|}{DC} & \multicolumn{2}{c|}{BC} & \multicolumn{2}{c|}{CR} & \multicolumn{2}{c|}{BFG} & \multicolumn{2}{c|}{$\mathcal{A}_1$} & \multicolumn{2}{c|}{$\mathcal{A}_2$} & \multicolumn{1}{c|}{$\Delta_{\omega}$} & \multicolumn{1}{c}{$\Delta_{\delta}$} \\
        & $\omega$ & $\delta$ & $\omega$ & $\delta$ & $\omega$ & $\delta$ & $\omega$ & $\delta$ & $\omega$ & $\delta$ & $\omega$ & $\delta$ & & \\
        \midrule
        20  & 1.0269 & 1.0244 & 1.0114 & 1.0000 & 1.0271 & 1.0244 & 1.0033 & 1.0000 & 0.3437 & 0.1560 & \textbf{0.2768} & \textbf{0.0680} & \textbf{72.41\%} & \textbf{93.20\%} \\
        40  & 1.0028 & 1.0000 & 1.0025 & 1.0000 & 1.0029 & 1.0000 & 0.7371 & 0.5000 & 0.2927 & 0.0966 & \textbf{0.1743} & \textbf{0.0401} & \textbf{76.35\%} & \textbf{91.98\%} \\
        60  & 1.0022 & 1.0000 & 1.0018 & 1.0000 & 1.0020 & 1.0000 & 0.5638 & 0.3333 & 0.2767 & 0.0822 & \textbf{0.1313} & \textbf{0.0300} & \textbf{76.71\%} & \textbf{91.00\%} \\
        80  & 1.0015 & 1.0000 & 1.0014 & 1.0000 & 1.0014 & 1.0000 & 0.3938 & 0.0716 & 0.2709 & 0.0733 & \textbf{0.1104} & \textbf{0.0244} & \textbf{71.96\%} & \textbf{65.92\%} \\
        100 & 1.0035 & 1.0000 & 1.0022 & 1.0000 & 1.0035 & 1.0000 & 0.2405 & 0.0298 & 0.2659 & 0.0698 & \textbf{0.0953} & \textbf{0.0212} & \textbf{60.37\%} & \textbf{28.86\%} \\
        120 & 0.5058 & 0.5021 & 0.5058 & 0.5021 & 0.5058 & 0.5021 & 0.2393 & 0.0237 & 0.2588 & 0.0585 & \textbf{0.0826} & \textbf{0.0190}& \textbf{65.48\%} & \textbf{19.83\%} \\
        \bottomrule
    \end{tabular}}
    \label{tab:results_ksat}
\end{table*}

Theorem~\ref{theorem:algorithm} establishes that, under the annealed approximation, the optimal pinning set is fully determined by the degree sequence, in particular by the abundance of low- and high-degree nodes. Motivated by this result, we design systematic experiments to verify that the proposed algorithm can accurately construct optimal pinning sets for networks with diverse degree-distribution characteristics. Specifically, we consider three representative features commonly observed in real-world networks—low-degree saturation, high-degree cutoff~\cite{barabasi2013networkscience, albert1999diameter} (see Figure~\ref{fig:1}), and variation in the power-law exponent:  
\begin{itemize}
    \item \textbf{Low-degree saturation} — A flattening of the degree distribution $p_k$ for $k < k_{\text{sat}}$, resulting in fewer very low-degree nodes than predicted by an ideal power law. Such saturation commonly arises from structural constraints, such as design requirements in engineered systems or minimum connectivity thresholds in infrastructure and communication networks.

    \item \textbf{High-degree cutoff} — An accelerated decay of $p_k$ for $k > k_{\text{cut}}$, which limits the size of the largest hubs. This phenomenon is often driven by intrinsic restrictions on node connectivity; for example, in social systems, individuals face cognitive and cost limits that cap the number of stable relationships, whereas in technological systems, bandwidth or hardware capacity constraints can produce similar effects.

    \item \textbf{Power-law exponent $\boldsymbol{\gamma}$} — The slope of the tail of the degree distribution, which controls the degree of heterogeneity. Smaller $\gamma$ yields a flatter tail with a heavier proportion of hubs, whereas larger $\gamma$ produces a steeper tail and more homogeneous connectivity. The value of $\gamma$ is shaped by network growth dynamics and attachment mechanisms, such as preferential attachment or resource competition.
\end{itemize}

These three features jointly determine the degree distribution, which can be modeled as~\cite{barabasi2013networkscience, albert1999diameter}
\begin{equation}\label{eq:degree_distribution}
p_k = a(k + k_{\text{sat}})^{-\gamma} \exp\left(-\frac{k}{k_{\text{cut}}}\right),
\end{equation}
where $k_{\text{sat}}$ captures low-degree saturation, the exponential term models high-degree cutoff, and $\gamma$ controls degree heterogeneity.

In the following experiments, we generate synthetic networks using the uniform configuration model with degree distributions specified by Equation~\eqref{eq:degree_distribution}. In each experiment, we vary a single parameter while keeping the other two fixed, thus enabling a controlled assessment of how each structural characteristic influences the optimal pinning performance. Here we present and discuss the results for the first two parameters; the experiments concerning the power-law exponent $\gamma$ are deferred to the Supplementary Material, Section ``Experimental Results of the Power-law Exponent.''

\subsubsection{Low-degree saturation}

We first isolate the effect of low-degree saturation by setting the high-degree cutoff to infinity. Specifically, we modify the degree distribution in Equation~\eqref{eq:degree_distribution} to
\[
p_k = a(k + k_{\text{sat}})^{-\gamma},
\]
with parameters \( a = 1 \) and \( \gamma = 1.5 \). We compare four baseline node selection strategies—degree centrality(DC), betweenness centrality(BC), and cycle ratio(CR)~\cite{fan2021cycles} and brute-force greedy strategy(BFG, a $1/\lambda_1(\mathcal{L}_F)$-based brute greedy selector)—against our two proposed methods: Algorithm \( \mathcal{A}_1 \), which exhibits oscillatory behavior in the effectiveness curve, and Algorithm \( \mathcal{A}_2 \), which eliminates this oscillation and yields globally optimal solutions. Notably, BFG achieves state-of-the-art performance among heuristic baselines, albeit at substantially higher computational cost.

We use the same performance metric, \(1/\lambda_1(\mathcal{L}_F)\), as defined in Section~\ref{section:power_law_features}, where \(p\) denotes the fraction of pinned nodes; smaller values of this metric indicate stronger control effectiveness~\cite{liu2021spectralpinning,fan2021cycles}. Experiments are conducted on networks with six low-degree saturation levels: \(k_{\text{sat}} = 20, 40, 60, 80, 100, 120\).

As shown in Figure~\ref{fig:results1}, both proposed algorithms consistently outperform all baselines. Algorithm \( \mathcal{A}_1 \) exhibits pronounced oscillations: when the number of selected nodes matches the size of a specific low-degree layer, the strategy switches abruptly from targeting locally beneficial nodes to selecting the entire layer. This discrete transition causes sharp spikes in \( 1/\lambda_1 \) (indicating a temporary drop in control effectiveness), followed by gradual decreases as more beneficial nodes are pinned, producing a characteristic sawtooth pattern. Algorithm \( \mathcal{A}_2 \) eliminates these oscillations by refining the switching rule: it only transitions to selecting an entire low-degree layer when doing so yields a strictly smaller \( 1/\lambda_1(\mathcal{L}_F) \) than selecting the same number of high-degree nodes, thereby consistently producing globally optimal pinning configurations with smooth performance curves.

Through our earlier descriptions of Algorithm \( \mathcal{A}_1 \) and Algorithm \( \mathcal{A}_2 \), as well as small-network validations in Section~\ref{sec:validation}, we observe that the solutions produced by both algorithms exhibit a “chaotic dependence” on the number of pinned nodes $c$. However, since Algorithm \( \mathcal{A}_2 \) enjoys optimality guarantees while Algorithm \( \mathcal{A}_1 \) does not, the curve corresponding to Algorithm \( \mathcal{A}_1 \) in pinning experiments tends to oscillate at those values of $c$ where “chaotic dependence” occurs. As the number of pinned nodes increases, such oscillations typically appear multiple times in the experiments, whereas the solutions of Algorithm \( \mathcal{A}_2 \) do not.

Furthermore, in Figure~\ref{fig:results1}, increasing \( k_{\text{sat}} \) raises the oscillation frequency of Algorithm \( \mathcal{A}_1 \) and reduces the horizontal width of each oscillation cycle. Within each curve, the vertical amplitude of oscillations tends to diminish as \( p \) increases, reflecting smoother performance at larger pinning fractions. Both algorithms show improved overall performance with higher \( k_{\text{sat}} \), consistent with our theoretical prediction that stronger low-degree saturation facilitates more effective pinning by enabling efficient selection of low-degree layers and increasing the minimum degree of remaining nodes.

We also report two complementary metrics in Table~\ref{tab:results_ksat}: the \emph{Pinning Efficiency}~\cite{fan2021cycles}, denoted by \( \omega \), and the \emph{End-point Pinning Effectiveness}, denoted by \( \delta \). The metric \( \omega \) is defined as
\begin{equation}
    \omega = \frac{1}{c_{\max}} \sum_{n=1}^{c_{\max}} \frac{1}{\lambda_1(\mathcal{L}_F)},
\end{equation}
which measures the average pinning control effectiveness across all pinning sizes up to the maximum fraction \( p_{\max} \). Smaller \( \omega \) values indicate higher average efficiency. The metric \( \delta \) is defined as
\begin{equation}
    \delta = \frac{1}{\lambda_1\left(\mathcal{L}_F\right)},
\end{equation}
corresponding to the performance at the largest tested pinning fraction \( p_{\max} \). Since \( 1/\lambda_1 \) is monotonically non-increasing with \( p \), this value coincides with the minimum over the range \( p \in [0, p_{\max}] \).

We further define relative improvement ratios of our methods over the best-performing suboptimal baseline as
\begin{equation}
    \Delta_\omega = \frac{\omega_{\text{SubOpt}} - \omega_{\text{OurBest}}}{\omega_{\text{SubOpt}}} \times 100\%,
\end{equation}
and
\begin{equation}
    \Delta_{\delta} = \frac{{\delta}_{\text{SubOpt}} - {\delta}_{\text{OurBest}}}{\delta_{\text{SubOpt}}} \times 100\%.
\end{equation}

As shown in Table~\ref{tab:results_ksat}, our Algorithm \( \mathcal{A}_2 \) outperforms all baselines in both metrics.
Across networks with six different \( k_{\text{sat}} \) values, the average improvement in \( \omega \) is \( \bar{\Delta}_{\omega_{\mathcal{A}_2}} = 70.55\% \), while that in \( \delta \) is \( \bar{\Delta}_{\delta_{\mathcal{A}_2}} = 65.13\% \). Moreover, both \( \omega \) and \( \delta \) decrease with increasing \( k_{\text{sat}} \), indicating that stronger low-degree saturation enhances pinning performance, in agreement with the trends observed in Figure~\ref{fig:results1}.

Beyond highlighting the advantages of our algorithms, the results in Figure~\ref{fig:results1} also expose a critical weakness of the heuristic baselines. As shown in Figure~\ref{fig:results1}, their \( 1/\lambda_1(\mathcal{L}_F) \) curves rapidly flatten, indicating that \(\lambda_1\) quickly reaches its theoretical bound—given by the minimum degree among free nodes—and cannot be further improved. For \( k_{\text{sat}} \le 100 \), this bound equals $1$ since the free set (selected by centrality strategies) contains degree-$1$ nodes, causing \(\lambda_1\) to converge to $1$. For \( k_{\text{sat}} = 120 \), all nodes have degree \(>1\), so \(\lambda_1\) converges to a value slightly below $1$. In both cases, once the bound is reached, further pinning yields no performance gain, making this plateau effect a pervasive limitation of traditional heuristics.

\subsubsection{High-degree cutoff}

We next examine the effect of high-degree cutoff on pinning control performance by varying the parameter \( k_{\text{cut}} \) in Equation~\eqref{eq:degree_distribution} while fixing \( a = 1 \), \( \gamma = 1.5 \), and \( k_{\text{sat}} = 20 \). Six cutoff levels are considered: \( k_{\text{cut}} = 100, 200, 300, 400, 500, 600 \).

As shown in Fig.~\ref{fig:results2}, both proposed algorithms consistently outperform all baselines across all cutoff levels. Increasing \( k_{\text{cut}} \)—which increases the likelihood of very high-degree nodes—leads to a mild increase in oscillation frequency and a gradual reduction in oscillation amplitude with growing pinning fraction \(p\). This trend indicates that a strong high-degree cutoff limits the ability to exploit hub nodes for enhancing synchronizability.

From Table~\ref{tab:results_kcut}, as \( k_{\text{cut}} \) increases, the values of both \( \omega \) and \( \delta \) decrease for all methods, reflecting an overall improvement in controllability. Our algorithms achieve the smallest values among all strategies, and their improvement ratios \(\Delta_{\omega}\) and \(\Delta_{\delta}\) also increase with \( k_{\text{cut}} \), showing that they can more effectively leverage the presence of high-degree nodes. For example, Algorithm~\(\mathcal{A}_2\) attains \(\delta=0.1020\) and \(\Delta_{\omega}=61.05\%\) at \(k_{\text{cut}}=600\), compared to \(\delta=0.1791\) and \(\Delta_{\omega}=45.56\%\) at \(k_{\text{cut}}=100\). Averaged over all six networks, the gains are \(\bar{\Delta}_{\omega_{\mathcal{A}_2}} = 57.10\%\) for Pinning Efficiency, and \(\bar{\Delta}_{\delta_{\mathcal{A}_2}} = 87.21\%\) for End-Point Pinning Effectiveness, confirming the substantial advantage of our approach.

These results are fully consistent with Theorem~\ref{thm:swap}, which predicts that incorporating even a small fraction of high-degree nodes into the pinning set can significantly reduce \(1/\lambda_1(\mathcal{L}_F)\) and thus enhance control effectiveness.

\begin{figure*}[htbp]
    \centering
    \begin{subfigure}{0.3\textwidth}
        \centering
        \includegraphics[width=\textwidth, height=0.73\textwidth]{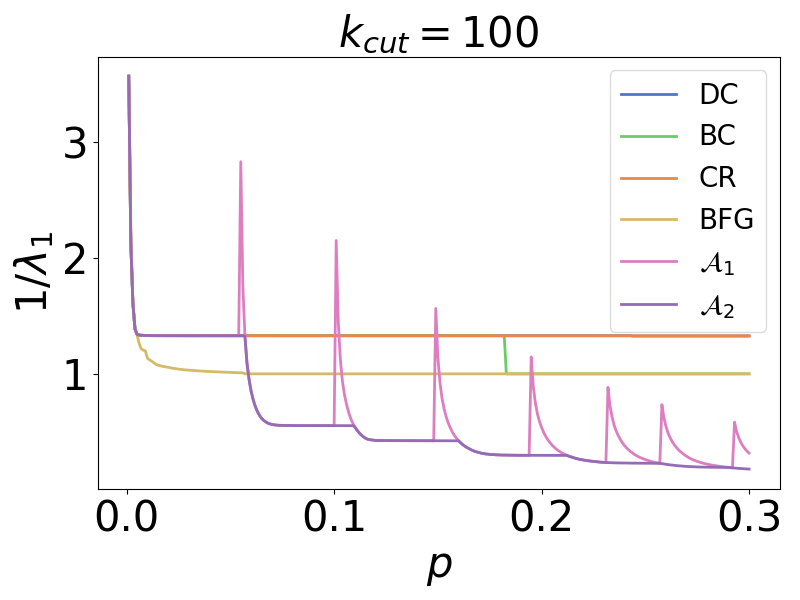}
        \caption{}
    \end{subfigure}
    \begin{subfigure}{0.3\textwidth}
        \centering
        \includegraphics[width=\textwidth, height=0.73\textwidth]{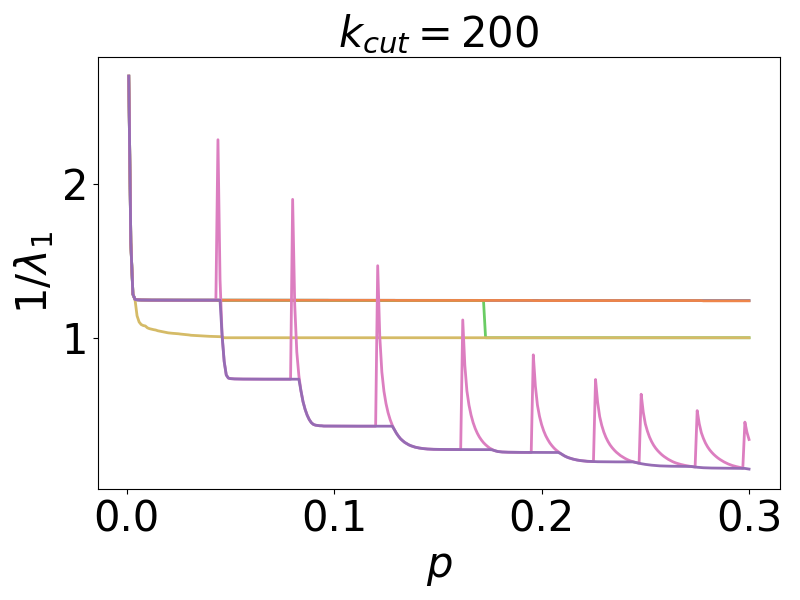}
        \caption{}
    \end{subfigure}
    \begin{subfigure}{0.3\textwidth}
        \centering
        \includegraphics[width=\textwidth, height=0.73\textwidth]{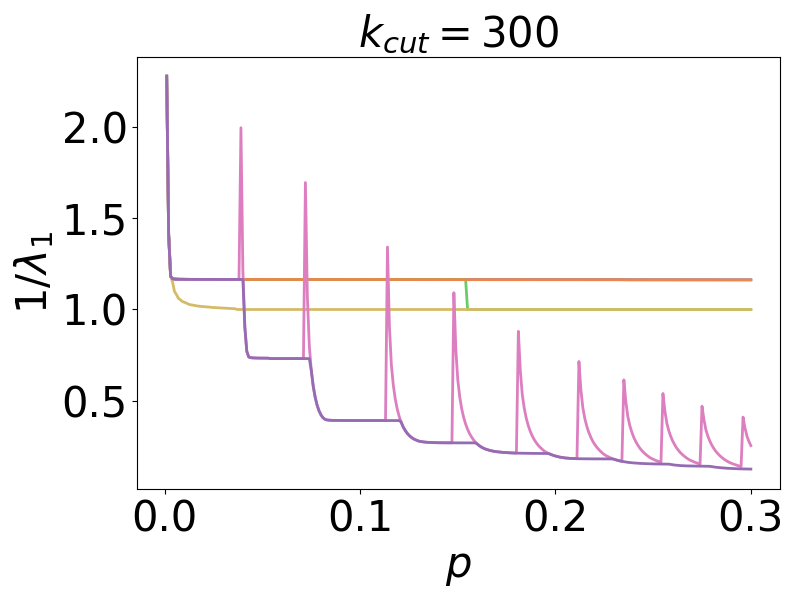}
        \caption{}
    \end{subfigure}
    \begin{subfigure}{0.3\textwidth}
        \centering
        \includegraphics[width=\textwidth, height=0.73\textwidth]{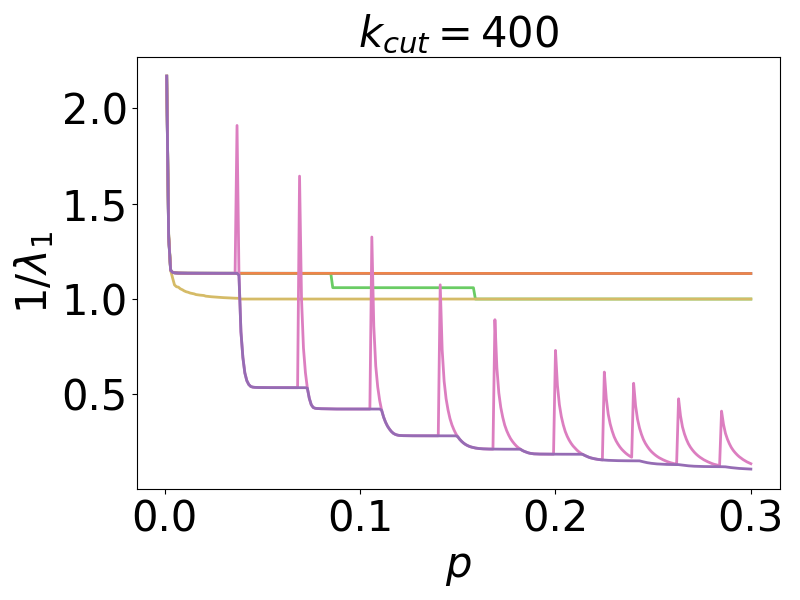}
        \caption{}
    \end{subfigure}
    \begin{subfigure}{0.3\textwidth}
        \centering
        \includegraphics[width=\textwidth, height=0.73\textwidth]{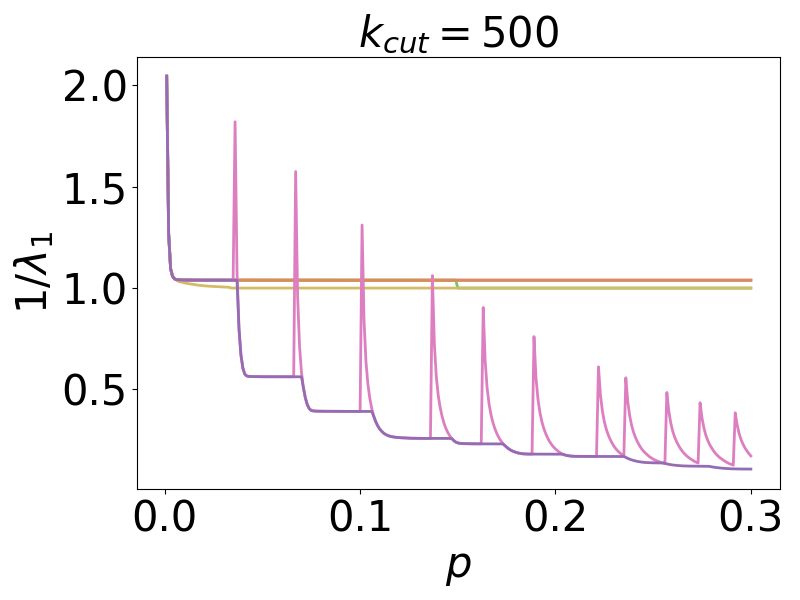}
        \caption{}
    \end{subfigure}
    \begin{subfigure}{0.3\textwidth}
        \centering
        \includegraphics[width=\textwidth, height=0.73\textwidth]{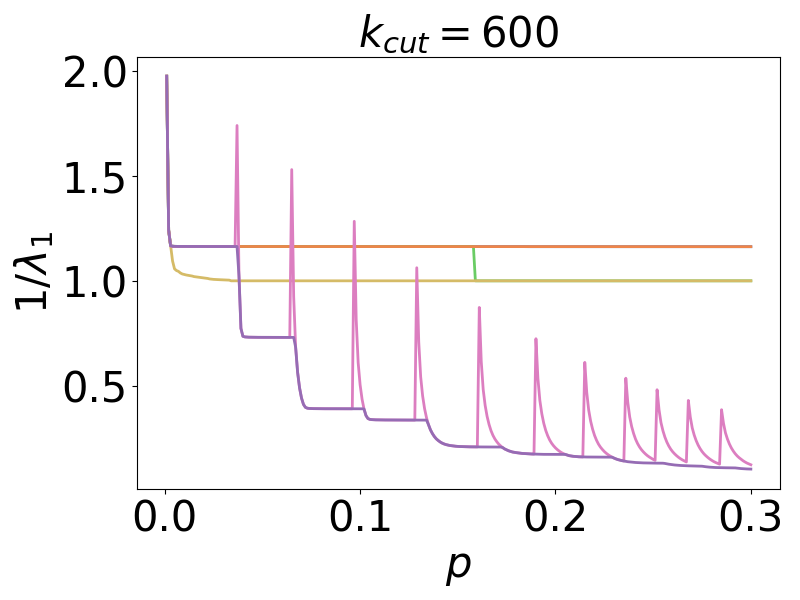}
        \caption{}
    \end{subfigure}
    \caption{Effect of high-degree cutoff on pinning performance for configuration networks with \(k_{\text{cut}}=100,200,300,400,500,600\).}
    \label{fig:results2}
\end{figure*}

\begin{table*}[htbp]
    \caption{Pinning Efficiency \( \omega \) and End-Point Pinning Effectiveness \( \delta \) under  varying \(k_{\text{cut}}\). Best results are in bold.}
    \centering
    \renewcommand{\arraystretch}{1.2}
    \resizebox{\textwidth}{!}{
    \large  
    \begin{tabular}{c|cc|cc|cc|cc|cc|cc|c|c}
        \toprule
        \multirow{2}{*}{$k_{cut}$} & \multicolumn{2}{c|}{DC} & \multicolumn{2}{c|}{BC} & \multicolumn{2}{c|}{CR} & \multicolumn{2}{c|}{BFG} & \multicolumn{2}{c|}{$\mathcal{A}_1$} & \multicolumn{2}{c|}{$\mathcal{A}_2$} & \multicolumn{1}{c|}{$\Delta_{\omega}$} & \multicolumn{1}{c}{$\Delta_{\delta}$} \\
        & $\omega$ & $\delta$ & $\omega$ & $\delta$ & $\omega$ & $\delta$ & $\omega$ & $\delta$ & $\omega$ & $\delta$ & $\omega$ & $\delta$ & & \\
        \midrule
        100 & 1.3395 & 1.3259 & 1.2108 & 1.0000 & 1.3403 & 1.3285 & 1.0257 & 1.0000 & 0.6327 & 0.3171 & \textbf{0.5584} & \textbf{0.1791} & \textbf{45.56\%} & \textbf{82.09\%} \\
        200 & 1.2491 & 1.2415 & 1.1455 & 1.0000 & 1.2488 & 1.2399 & 1.0142 & 1.0000 & 0.5423 & 0.3388 & \textbf{0.4807} & \textbf{0.1465} & \textbf{52.60\%} & \textbf{85.35\%} \\
        300 & 1.1687 & 1.1628 & 1.0891 & 1.0000 & 1.1687 & 1.1628 & 1.0092 & 1.0000 & 0.4869 & 0.2527 & \textbf{0.4237} & \textbf{0.1236} & \textbf{58.02\%} & \textbf{87.64\%} \\
        400 & 1.1383 & 1.1336 & 1.0568 & 1.0000 & 1.1384 & 1.1336 & 1.0084 & 1.0000 & 0.4470 & 0.1372 & \textbf{0.3869} & \textbf{0.1090} & \textbf{61.63\%} & \textbf{89.10\%} \\
        500 & 1.0433 & 1.0388 & 1.0237 & 1.0000 & 1.0433 & 1.0388 & 1.0064 & 1.0000 & 0.4272 & 0.1722 & \textbf{0.3648} & \textbf{0.1073} & \textbf{63.75\%} & \textbf{89.27\%} \\
        600 & 1.1662 & 1.1628 & 1.0890 & 1.0000 & 1.1661 & 1.1628 & 1.0067 & 1.0000 & 0.4482 & 0.1226 & \textbf{0.3921} & \textbf{0.1020} & \textbf{61.05\%} & \textbf{89.80\%} \\
        \bottomrule
    \end{tabular}}
    \label{tab:results_kcut}
\end{table*}

\begin{table}[htbp]
    \caption{Information of real networks}
    \centering
    \setlength{\tabcolsep}{13pt}
    \small 
    \begin{tabular}{lcccc}
        \toprule
        \textbf{Network} & \textbf{N} & \textbf{M} & \textbf{\( \langle k \rangle \)} \\
        \midrule
        Email & 1133 & 5451 & 9.62 \\
        Econ & 2529 & 86768 & 68.62\\
        Jazz & 198 & 2742 & 27.70 \\
        \bottomrule
    \end{tabular}
    \label{tab:network}
\end{table}

\begin{figure*}[htbp]
    \centering
    \begin{subfigure}{0.3\textwidth}
        \centering
        \includegraphics[width=\textwidth, height=0.73\textwidth]{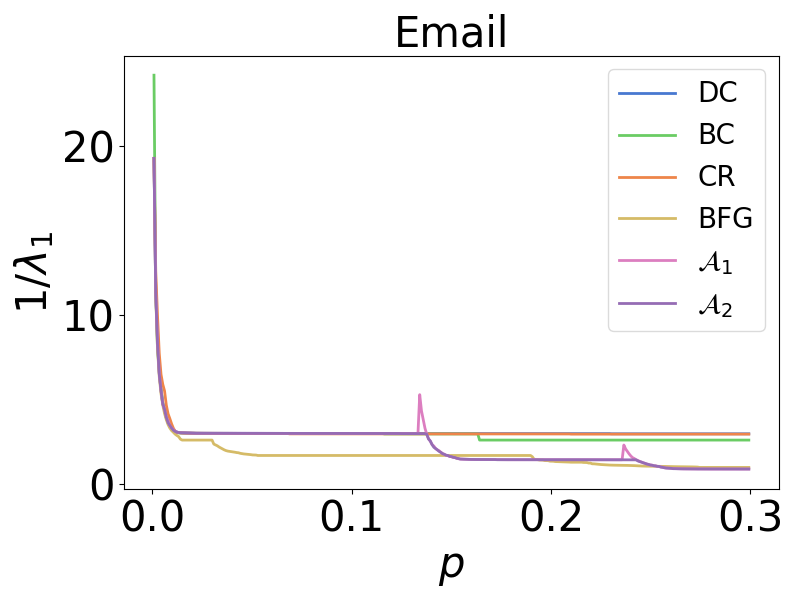}
        \caption{}
    \end{subfigure}
    \begin{subfigure}{0.3\textwidth}
        \centering
        \includegraphics[width=\textwidth, height=0.73\textwidth]{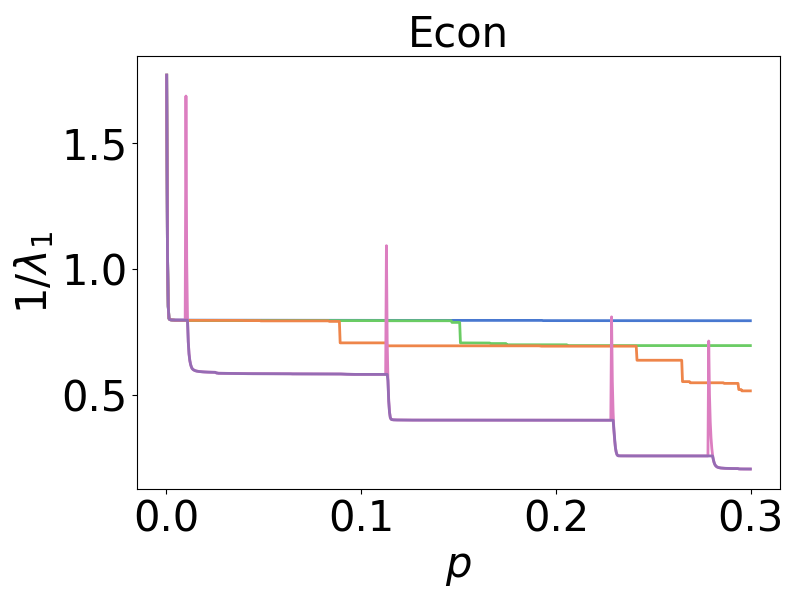}
        \caption{}
    \end{subfigure}
    \begin{subfigure}{0.3\textwidth}
        \centering
        \includegraphics[width=\textwidth, height=0.73\textwidth]{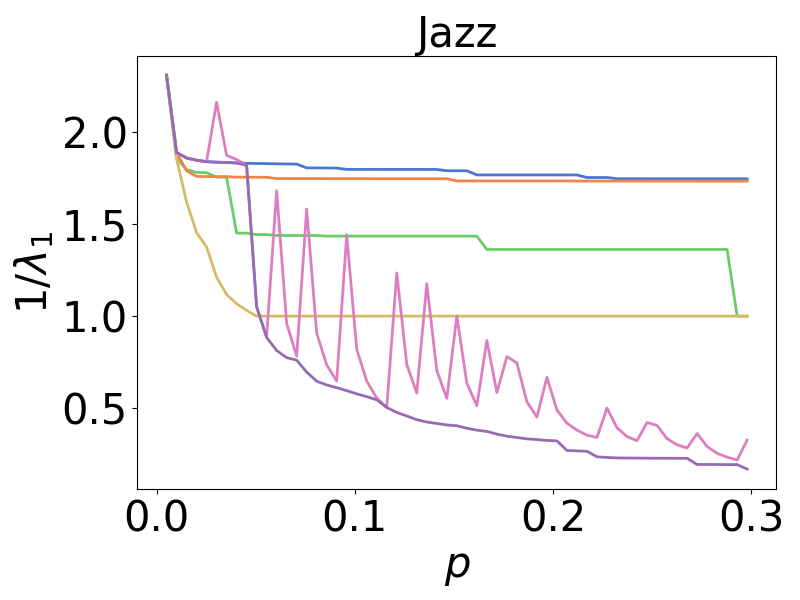}
        \caption{}
    \end{subfigure}
    \caption{Pinning control performance in three empirical networks. The proposed algorithms \(\mathcal{A}_1\) and \(\mathcal{A}_2\) consistently achieve lower \(1/\lambda_1(L_{N(1-p)})\) curves than all baselines across the entire pinning fraction \(p\), indicating superior synchronizability after pinning.}
    \label{fig:results4}
\end{figure*}

\begin{table*}[htbp]
    \caption{Comparison of \emph{Pinning Efficiency} \(\omega\) and \emph{End-Point Pinning Effectiveness} \(\delta\) in empirical networks. Best results are in bold.}
    \centering
    \renewcommand{\arraystretch}{1.2}
    \resizebox{\textwidth}{!}{
    \large 
    \begin{tabular}{c|cc|cc|cc|cc|cc|cc|c|c}
        \toprule
        \multirow{2}{*}{Network} & \multicolumn{2}{c|}{DC} & \multicolumn{2}{c|}{BC} & \multicolumn{2}{c|}{CR} & \multicolumn{2}{c|}{BFG} & \multicolumn{2}{c|}{$\mathcal{A}_1$} & \multicolumn{2}{c|}{$\mathcal{A}_2$} & \multicolumn{1}{c|}{$\Delta_{\omega}$} & \multicolumn{1}{c}{$\Delta_{\delta}$} \\
        & $\omega$ & $\delta$ & $\omega$ & $\delta$ & $\omega$ & $\delta$ & $\omega$ & $\delta$ & $\omega$ & $\delta$ & $\omega$ & $\delta$ & & \\
        \midrule
        Email & 3.1287 & 2.9898 & 2.9644 & 2.6180 & 3.1376 & 2.9697 & \textbf{1.7445} & 1.0000 & 2.2539 & 0.9053 & 2.2296 & \textbf{0.9052} & - & \textbf{9.48\%} \\
        Econ  & 0.7944 & 0.7944 & 0.7482 & 0.6955 & 0.7044 & 0.5155 & - & - & 0.4477 & 0.2048 & \textbf{0.4433} & \textbf{0.2048} & \textbf{37.07\%} & \textbf{60.27\%} \\
        Jazz  & 1.7946 & 1.1628 & 1.4407 & 1.0000 & 1.7536 & 1.7328 & 1.0685 & 1.0000 & 0.8374 & 0.3270 & \textbf{0.6355} & \textbf{0.1699} & \textbf{40.52\%} & \textbf{83.01\%} \\
        \bottomrule
    \end{tabular}}
    \label{tab:results_real}
\end{table*}

\subsection{Real-World Networks}

We further validated our methods on three real-world networks exhibiting power-law-like degree distribution characteristics~\cite{fan2021cycles}, summarized in Table~\ref{tab:network}. The datasets span different domains and structural scales, including the Email communication network~\cite{guimera2003self}, the Econ trade network~\cite{nr}, and the Jazz musician collaboration network~\cite{gleiser2003community}.

As shown in Fig.~\ref{fig:results4} (see also Table~\ref{tab:results_real}), both proposed algorithms \(\mathcal{A}_1\) and \(\mathcal{A}_2\) consistently outperform all baselines across the three empirical networks. The curves of \(1/\lambda_1(\mathcal{L}_F)\) are uniformly lower for our methods over the entire pinning range \(p\), and the aggregate metrics—\(\omega\) and \(\delta\)—are also consistently smaller, indicating stronger post-pinning synchronizability than competing strategies. Notably, for the Econ network, we did not include the BFG baseline due to its excessive computational cost.

These empirical results demonstrate the robustness and practical extensibility of our approach: by exploiting degree-distribution information under the annealed-approximation framework, the proposed algorithms provide reliable guidance for selecting the optimal driver-node set for pinning control in real-world networks.

\section{Discussion}
The study provides new mechanistic insights into optimal pinning control through an analytical framework based on the annealed approximation. The derived degree-based selection rule reveals that the globally optimal pinning configuration emerges from a degree-layered structure, balancing the benefits of pinning low-degree nodes—which can produce a larger relative increase in $\lambda_1(L_F)$ in budget-limited regimes—and the connectivity advantages of high-degree nodes. This finding fundamentally challenges the prevailing hub-dominated heuristics, showing that optimal strategies can involve nontrivial combinations of low- and high-degree nodes.

A key theoretical discovery is the chaotic dependence of the optimal driver set on its cardinality: marginal changes in the pinning budget can trigger abrupt reconfigurations of the selected nodes and significant fluctuations in control effectiveness. Unlike traditional heuristics that miss this phenomenon, our framework explains its structural origin via the degree-layered composition and offers constructive algorithms (\(\mathcal{A}_1,\mathcal{A}_2\)). This insight turns chaotic dependence from a source of fragility into a guide for robust and adaptive strategy design.

Beyond theory, the proposed algorithms are computationally efficient and empirically validated on both synthetic and real networks, demonstrating scalability and reproducibility across diverse network settings. They consistently reduce Pinning Efficiency \(\omega\) and End-Point Pinning Effectiveness \(\delta\) relative to baselines, demonstrating that exploiting degree-distribution features leads to substantial and reproducible gains. By bridging topological intuition with spectral control metrics, this work establishes a general design principle applicable to a broad spectrum of networked systems.

\section{Conclusion and Future Work} \label{sec:conclusion}
This work addresses the fundamental problem of globally optimal driver-node selection for pinning control by leveraging a degree-based mean-field (annealed) approximation, which transforms a combinatorial optimization challenge into a tractable analytical mechanism. The study offers three principal outcomes. First, it provides a theoretical mechanism in the form of a degree-based selection rule that reduces the problem to a structured degree-layered decision, yielding closed-form optimality conditions under the annealed model, and clarifying the role of both low- and high-degree nodes in enhancing synchronizability. Second, it formally identifies and analyzes the chaotic dependence of the global optimum on set cardinality, and presents constructive algorithms that yield the exact global optimum within the annealed approximation framework. Our analysis not only explains why low-degree nodes can be beneficial in certain regimes but also reveals the chaotic dependence of the optimal set on its cardinality, a phenomenon that introduces both opportunities and challenges for control design. Third, it delivers practical guidance by quantifying how $k_{\text{sat}}$, $k_{\text{cut}}$, and $\gamma$ systematically influence spectral controllability, offering concrete rules for selecting driver nodes in real systems.

Extensive experiments on configuration-model ensembles and empirical networks validate these theoretical predictions and demonstrate consistent, often substantial, improvements over heuristic baselines. These results establish a principled link between degree-distribution structure and controllability, and provide a practical toolkit for driver-node placement.

The insights obtained here have cross-domain relevance. In power systems~\cite{dorfler2013synchronization,motter2013spontaneous}, they may guide optimal actuator placement for frequency synchronization; in brain networks~\cite{bassett2017network,muldoon2016stimulation}, they may inform stimulation targeting for neurological disorders; in biochemical systems~\cite{alon2009design,cornelius2013realistic}, they may identify key regulatory points for metabolic control. The analytical structure developed in this work thus serves as a generalizable design principle for complex network control.

Several directions emerge for future work. Extending the theoretical framework beyond the annealed approximation to quenched networks with clustering, community structure, or degree correlations would allow for application to more realistic topologies. Adapting the degree-based rule to nonlinear or time-delayed dynamics, such as adaptive Kuramoto models or nonlinear consensus protocols, remains an open challenge. Building on the current global solution, robust and adaptive strategies should be developed to maintain high performance under perturbations to network topology or budget constraints, including dynamic and multilayer networks. It is also important to generalize the framework to directed, weighted, and multilayer graphs where degree-like measures and interlayer couplings interact. From a computational perspective, scalable and distributed implementations could enable deployment in billion-node networks, while incorporating cost or capacity constraints into the optimization. 

In summary, this work provides a rigorous theoretical foundation, an analytically tractable solution, and empirically validated algorithms for optimal pinning control. By linking degree-distribution features to controllability, it advances both the mechanistic understanding and the practical design of effective control strategies in complex networks.

\section*{Acknowledgment}
This work was supported by the National Natural Science Foundation of China (Grant No. T2293771, 62503447), the STI 2030 Major Projects (Grant No. 2022ZD0211400), the China Postdoctoral Science Foundation (Grant No. 2024M763131), the Postdoctoral Fellowship Program of CPSF (Grant No. GZC20241653), and the New Cornerstone Science Foundation through the XPLORER PRIZE.




\bibliographystyle{IEEEtran}
\bibliography{main}

%


\begin{IEEEbiography}[{\includegraphics[width=1in,height=1.25in,clip,keepaspectratio]{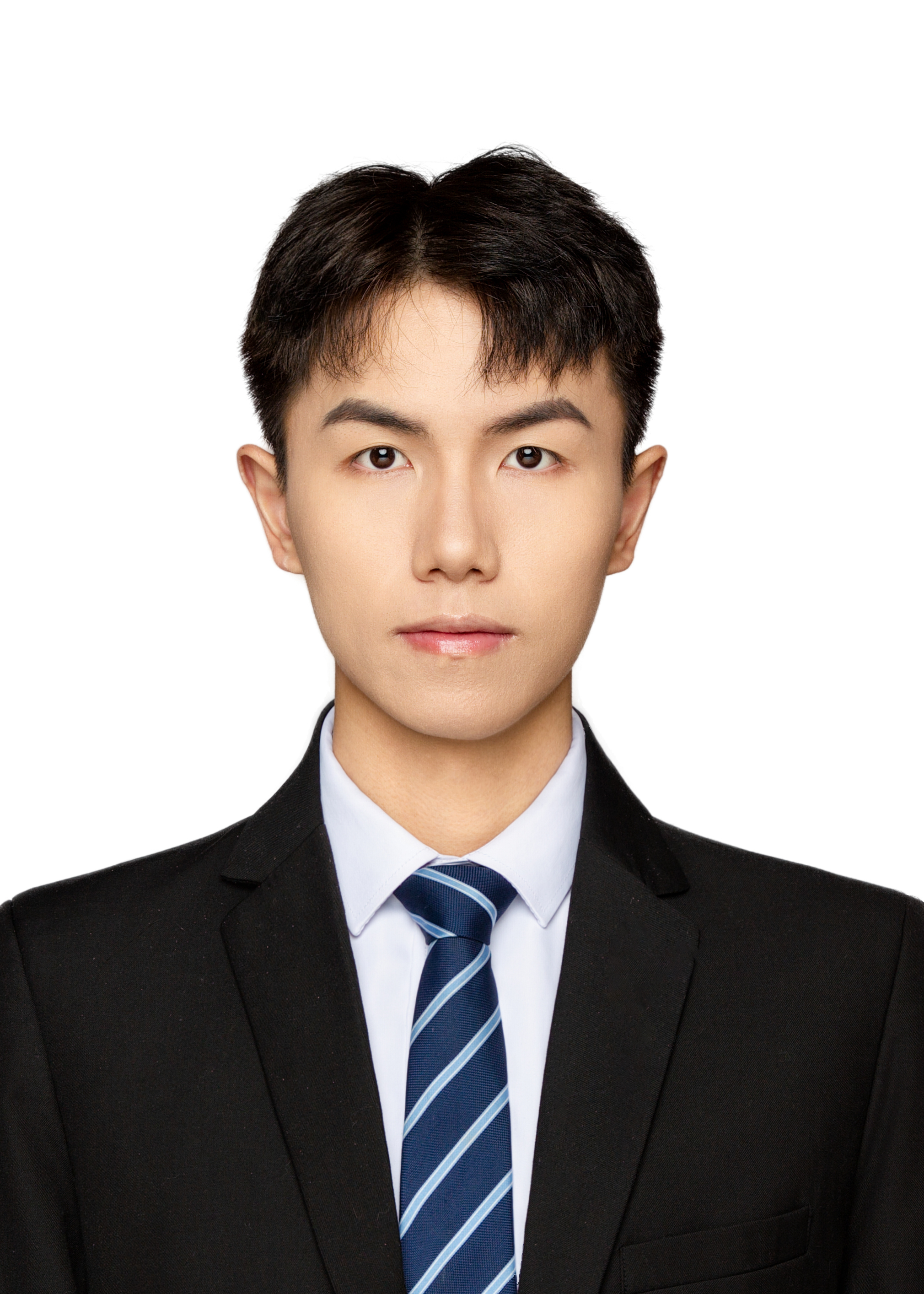}}]{Qingyang Liu}
received the B.Eng. degree in information security from University of Science and Technology of China, Hefei, China, in 2023. He is currently pursuing the Ph.D degree with the School of Cyber Science and Technology, University of Science and Technology of China, Hefei, China. 

His current research interests include synchronization and pinning control of complex networks.
\end{IEEEbiography}

\begin{IEEEbiography}[{\includegraphics[width=1in,height=1.25in,clip,keepaspectratio]{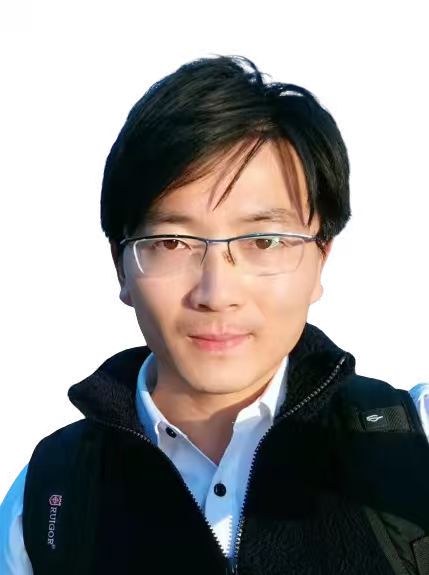}}]
{Tianlong Fan} received the Ph.D. degree in theoretical physics from the University of Fribourg, Fribourg, Switzerland, in 2023. He is currently a postdoctoral researcher in the School of Cyber Science and Technology, University of Science and Technology of China, Hefei, China. 

His current research interests include complex networks and theoretical physics.
\end{IEEEbiography}

\begin{IEEEbiography}[{\includegraphics[width=1in,height=1.25in,clip,keepaspectratio]{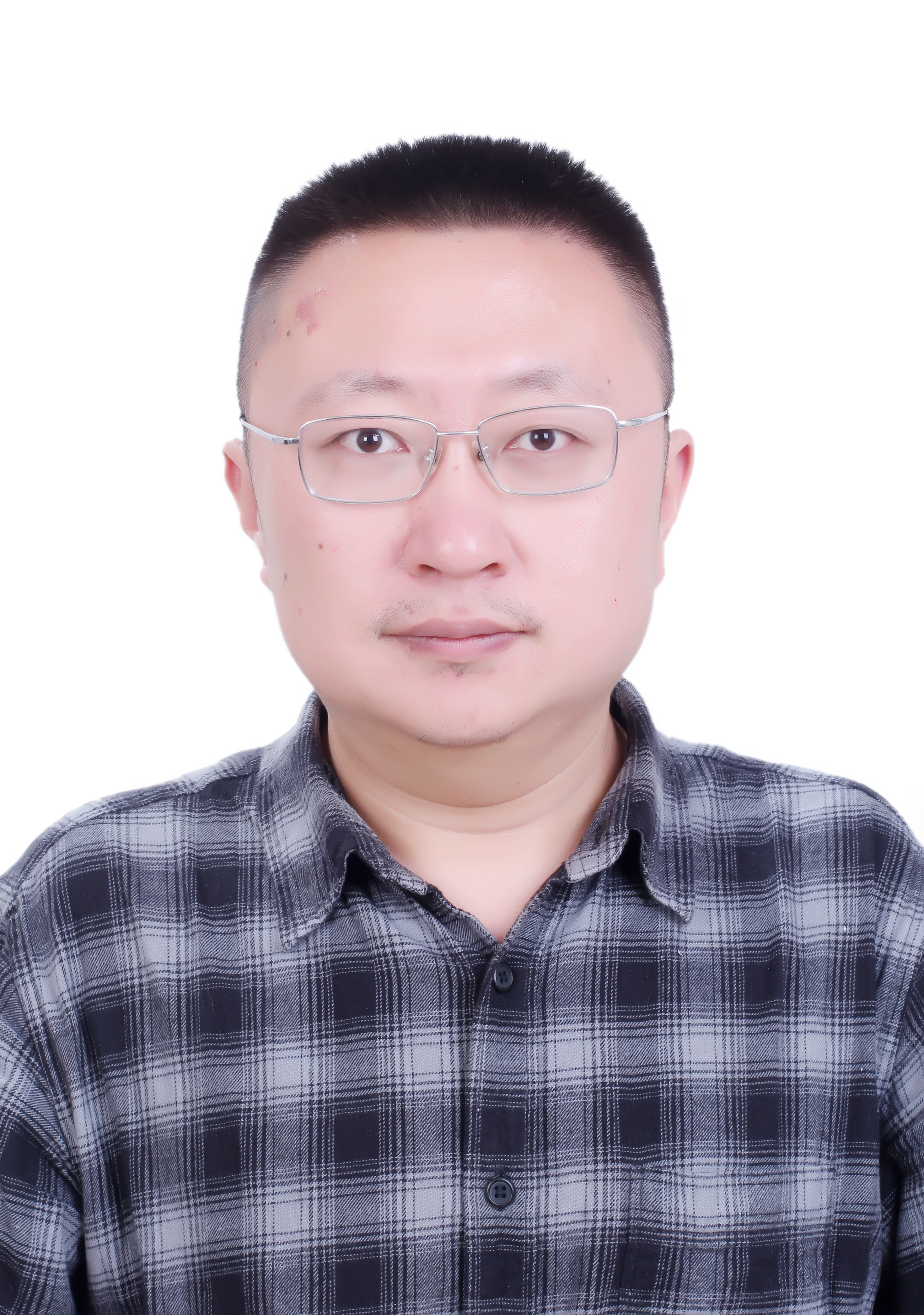}}]
{Liming Pan} received the Ph.D. degree in computer science from University of Electronic Science and Technology of China, Chengdu, China, in 2019. He is currently a associate researcher in School of Cyber Science, University of Science and Technology of China.

His current research interests include complex systems and artificial intelligence.
\end{IEEEbiography}

\begin{IEEEbiography}[{\includegraphics[width=1in,height=1.25in,clip,keepaspectratio]{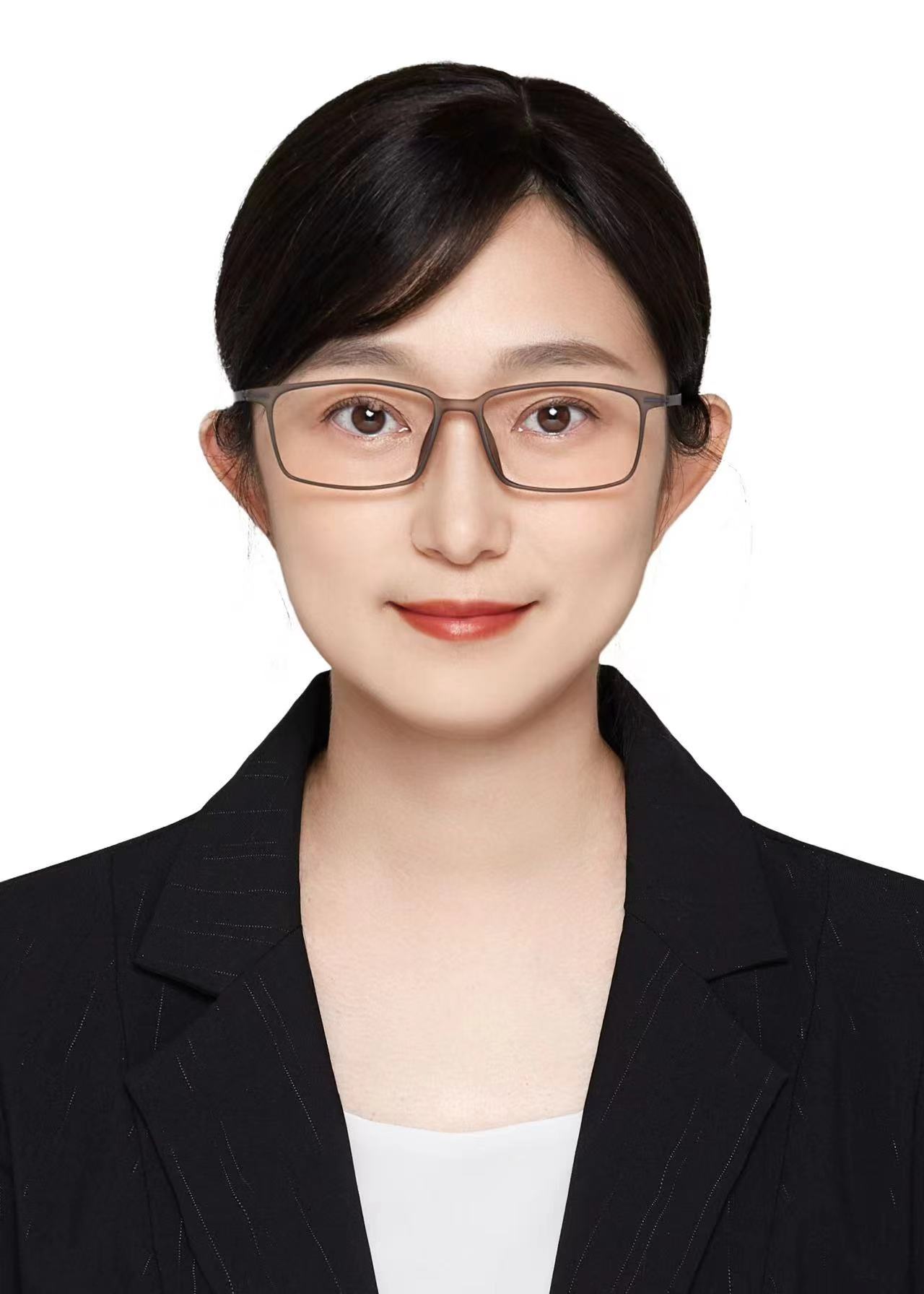}}]
{Linyuan {Lü}}
(Senior Member, IEEE) received the Ph.D. degree in theoretical physics from the Universite de Fribourg, Fribourg, Switzerland, in 2012. She is currently a Professor with the University of Science and Technology of China, Hefei, China. She has published over 80 articles in leading journals, including National Science Review, Physics Reports, and Nature Communications. 

Her current research interests include complex systems and higher-order network analysis.
\end{IEEEbiography}




\end{document}